\newtheorem{theorem}{Theorem}
\newtheorem{lemma}{Lemma}
\newtheorem{definition}{Definition}
\newtheorem{property}{Property}
\newtheorem{claim}{Claim}
\newcommand{\size}[1]{\ensuremath{|#1|}}
\def\el{\emph{et al.}}
\def\A{\mathcal{A}}
\def\B{\mathcal{B}}
\def\F{\mathcal{F}}
\def\I{\mathcal{I}}
\def\M{\mathcal{M}}
\def\O{\mathcal{O}}
\def\P{\mathcal{P}}
\def\R{\mathcal{R}}
\def\T{\mathcal{T}}
\def\V{\mathcal{V}}
\def\OPT{\mbox{OPT}}
\title{Parameterized Approximation Algorithms for TSP on Non-Metric Graphs}
\author[1,2]{Jingyang Zhao\thanks{Email: \texttt{jingyangzhao1020@gmail.com}}}
\author[3]{Zimo Sheng\thanks{Email: \texttt{shengzimo2016@gmail.com}}}
\author[1]{Mingyu Xiao\thanks{Email: \texttt{myxiao@uestc.edu.cn}}}
\affil[1]{University of Electronic Science and Technology of China}
\affil[2]{Kyung Hee University, Yongin-si, South Korea}
\affil[3]{Anhui University}
\date{}
\begin{document}

\maketitle

\begin{abstract}
The Traveling Salesman Problem (TSP) is a classic and extensively studied problem with numerous real-world applications in artificial intelligence and operations research. It is well-known that TSP admits a constant approximation ratio on metric graphs but becomes NP-hard to approximate within any computable function $f(n)$ on general graphs. This disparity highlights a significant gap between the results on metric graphs and general graphs. Recent research has introduced some parameters to measure the ``distance'' of general graphs from being metric and explored Fixed-Parameter Tractable (FPT) approximation algorithms parameterized by these parameters. Two commonly studied parameters are $p$, the number of vertices in triangles violating the triangle inequality, and $q$, the minimum number of vertices whose removal results in a metric graph. In this paper, we present improved FPT approximation algorithms with respect to these two parameters. For $p$, we propose an FPT algorithm with a 1.5-approximation ratio, improving upon the previous ratio of 2.5. For $q$, we significantly enhance the approximation ratio from 11 to 3, advancing the state of the art in both cases. In addition, when $p$ (or $q$) is a constant, we obtain a better approximation ratio.
\end{abstract}

\maketitle

\section{Introduction}
The Traveling Salesman Problem (TSP) is a well-known optimization problem with numerous real-world applications~\cite{applegate2006traveling}, including logistics, genetics, manufacturing, and telecommunications. Given an edge-weighted complete graph $G=(V,E,w)$ with $n$ vertices and a non-negative edge weight function $w:E\to\mathbb{R}_{\geq 0}$, the objective is to find a minimum-weight TSP tour (i.e., Hamiltonian cycle) that visits each vertex exactly once. Due to its NP-hardness~\cite{garey1979computers}, TSP and its variants have been extensively studied in the context of approximation algorithms~\cite{DBLP:journals/corr/abs-2311-00604,approxtsp}.

While TSP on general graphs is NP-hard to approximate within any computable function $f(n)$~\cite{sahni1976p}, its widely-used variant on metric graphs (metric TSP), where the weight function satisfies the triangle inequality, admits a constant approximation ratio. Notably, Christofides~\cite{christofides1976worst} and Serdyukov~\cite{serdyukov1978some} independently proposed the famous $1.5$-approximation algorithm for metric TSP. Later, Karlin \el~\cite{DBLP:conf/stoc/KarlinKG21,DBLP:conf/ipco/KarlinKG23} improved the ratio to $1.5-10^{-36}$.

Given the significant gap between the results on general graphs and metric graphs, many approximation algorithms have been developed for graphs that are ``nearly'' metric, aiming to measure the ``distance'' from approximability. In the literature, two main research directions are explored. The first direction relaxes the condition for the triangle inequality to hold, for instance, by satisfying the \emph{$\tau$-triangle inequality}: $\tau\cdot (w(a,b)+w(b,c))\geq w(a,c)$ for all $a,b,c\in V$. The second direction relaxes the scope of the triangle inequality, where it is only required to hold for a subset of the vertices.

For the first direction, a comprehensive research framework has been established (see the survey~\cite{klasing2018modern}). Andreae and Bandelt~\cite{andreae1995performance} gave a $(3\tau^2+\tau)/2$-approximation algorithm for TSP. Bender and Chekuri~\cite{bender2000performance} improved this ratio to $4\tau$ when $\tau>7/3$, and M{\"o}mke~\cite{momke2015improved} further improved the ratio to $3(\tau^2+\tau)/4$ for small $\tau$. More improvements can be found in~\cite{andreae2001traveling,bockenhauer2002towards}.

For the second direction, two lines of research have made explorations. First, for a graph where the vertex set can be partitioned into two subsets $V_1$ and $V_2$ such that both $G[V_1]$ and $G[V_2]$ are metric, Mohan~\cite{mohan2017constant} proposed a $3.5$-approximation algorithm. Further, when the vertex set is partitioned into $k$ groups and the tour must visit vertices within each group consecutively, the problem is known as \emph{Subgroup Path Planning}. In this problem, the triangle inequality $w(a,b)+w(b,c)\geq w(a,c)$ may be violated only when vertices $a$ and $c$ belong to the same group while $b$ belongs to a different one. This problem has applications in AI, such as for polishing robots~\cite{DBLP:conf/ijcai/SafilianHES16}. Sumita~\el~\cite{DBLP:conf/ijcai/SumitaYKK17} first proposed a $3$-approximation algorithm, which was recently improved to $2.167$~\cite{DBLP:conf/aaai/000100X25}. Moreover, when each group has a size of at most 2, the problem reduces to \emph{Subpath Planning}~\cite{DBLP:conf/ijcai/SafilianHES16}, which has applications in electronic printing and admits a $1.5$-approximation ratio~\cite{DBLP:conf/ijcai/SumitaYKK17}.

In fact, a natural direction for relaxing the scope of the triangle inequality is to consider graphs where only a small number of vertices are involved in triangles that violate the inequality. A practical example is urban tour bus routing. Operators design TSP tours to visit landmarks, but exclusive express routes between certain high-profile landmarks create shortcuts that violate the triangle inequality. These violations are typically caused by a limited number of landmarks, meaning the metric property could be restored by removing just a few critical nodes.

Recently, this direction has been investigated using techniques from parameterized complexity. For a minimization problem, an algorithm is called an Fixed-Parameter Tractable (FPT) $\rho$-approximation algorithm if, for any instance $\I$ with a parameter $k$, it runs in $\O(f(k))\cdot poly(\size{\I})$ time and returns a $\rho$-approximate solution, where $f(k)$ is a computable function of $k$. Such algorithms are particularly useful when $k$ is small~\cite{cygan2015parameterized}. Following this line, Zhou~\el~\cite{ZhouLG22,ZhouLG22+} introduced two natural parameters: the number of triangles $p'$ that violate the triangle inequality, and the minimum number of vertices $q$ whose removal results in a metric graph. 
For TSP parameterized by $p'$, they presented an FPT $3$-approximation algorithm, which was recently improved to $2.5$ by Bampis \el~\cite{arxiv24}. 
For TSP parameterized by $q$, they gave an FPT $11$-approximation algorithm.

Our research follows this line to study FPT approximation algorithms for TSP with respect to the parameters $p$ and $q$, where $p$ denotes the number of vertices involved in triangles that violate the triangle inequality.
Although $p' = \Omega(p)$, existing FPT approximation algorithms for TSP with respect to $p'$ also extend directly to $p$, as noted by Bampis \el~\cite{arxiv24}.
Moreover, FPT approximation algorithms with respect to $p$ and $q$ have also been studied for several location problems~\cite{DBLP:conf/cocoon/ZhouZWG25}.

\subsection{Our Results}
In this paper, we present improved FPT approximation algorithms for TSP with respect to the parameters $p$ and $q$, addressing the question raised in~\cite{arxiv24} regarding the potential for further improvement in the approximation ratio. Let $\alpha$ denote the approximation ratio for metric TSP. Now, $\alpha = 1.5 - 10^{-36}$~\cite{DBLP:conf/ipco/KarlinKG23}.
Our results are summarized as follows.

For the parameter $p$, we first introduce a simple $(\alpha + 1)$-approximation algorithm with running time $2^{\O(p)} + n^{\O(1)}$. 
This improves the previous 2.5-approximation algorithm with running time $2^{\O(p \log p)} \cdot n^{\O(1)}$~\cite{arxiv24}. Then, we propose a more refined $1.5$-approximation algorithm with running time $2^{\O(p \log p)} \cdot n^{\O(1)}$.
{Moreover, using the results from \cite{DBLP:journals/siamcomp/TraubVZ22}, we propose an $(\alpha+\varepsilon)$-approximation algorithm with running time $n^{\O(p/\varepsilon)}$ for any constant $\varepsilon>0$. 
Hence, when $p=\O(1)$, the approximation ratio almost matches that of metric TSP.}

For the parameter $q$, we propose an FPT $3$-approximation algorithm, which significantly improves the previous FPT $11$-approximation algorithm~\cite{ZhouLG22+}. Both algorithms run in $2^{\O(q \log q)} \cdot n^{\O(1)}$ time. Our result also matches the approximation ratio of an earlier algorithm with a non-FPT running time of $n^{\O(q+1)}$~\cite{ZhouLG22}.
{Similarly, we further improve this result by presenting an $(\alpha+\varepsilon)$-approximation algorithm with running time $n^{\O(q/\varepsilon)}$ for any constant $\varepsilon>0$.}

A summary of the previous and our results is provided in Table~\ref{res}. Our improvements arise from the application of novel techniques for handling non-metric graphs, which may also prove valuable for tackling other related problems. 

\begin{table}[t]
    \centering
    \begin{tabular}{c|ccc}
        \toprule
        \textbf{Parameter} & \textbf{Approximation} & \textbf{Running Time} & \textbf{Reference}\\
        \midrule
        \multirow{4}*{$p$} &  $3$ & $2^{\mathcal{O}({p\log p})}\cdot n^{\O(1)}$ & Zhou \el~\cite{ZhouLG22}\\
        
         &  $2.5$ & $2^{\mathcal{O}({p\log p})}\cdot n^{\O(1)}$ & Bampis \el~\cite{arxiv24}\\
        
        \cline{2-4}
         & $\alpha+1$ & ${2^{\mathcal{O}(p)} + n^{\O(1)}}$& \multirow{3}*{\textbf{This paper}}\\
        &  ${1.5}$ & $2^{\mathcal{O}({p\log p})}\cdot n^{\O(1)}$&\\
        
        &  $\alpha+\varepsilon$ & $n^{\O(p/\varepsilon)}$ &\\
        \midrule
        \multirow{3}*{$q$} &  $3$ & $n^{\O(q+1)}$ & Zhou \el~\cite{ZhouLG22}\\
        &  $11$ & $2^{\mathcal{O}({q\log q})}\cdot n^{\O(1)}$ & Zhou \el~\cite{ZhouLG22+}\\
        \cline{2-4}
        & ${3}$ & $2^{\mathcal{O}({q\log q})}\cdot n^{\O(1)}$& \multirow{2}*{\textbf{This paper}}\\
        &  $\alpha+\varepsilon$ & $n^{\O(q/\varepsilon)}$ &\\
        \bottomrule
    \end{tabular}
    \caption{A summary of the previous and our results on approximation algorithms for TSP parameterized by $p$ and $q$.}
    \label{res}
\end{table}

\section{Preliminary}
Let $G = (V, E, w)$ be a complete input graph with $\lvert V \rvert = n$ and edge weights $w: E \to \mathbb{R}_{\geq 0}$.  
For any subset $E' \subseteq E$, define the total weight as $w(E') \coloneqq \sum_{e \in E'} w(e)$.  
We say that $G$ is a \emph{metric} graph if, for all $a, b, c \in V$, the following conditions hold:  
$w(a, a) = 0$, $w(a, b) = w(b, a)$, and the triangle inequality $w(a, b) \leq w(a, c) + w(c, b)$.

Although the input graph contains no multiple edges, some operations (e.g., union of two graphs) may introduce multi-edges. Hence, we allow multi-edge graphs throughout the paper.
For any $V' \subseteq V$, let $G[V']$ denote the subgraph of $G$ induced by $V'$.  
For any subgraph $G' \subseteq G$, let $V(G')$ and $E(G')$ denote its vertex and edge sets, respectively.

A \emph{spanning $t$-forest} $\mathcal{F}$ in $G$ is a set of $t$ vertex-disjoint trees that together cover all vertices in $V$.  
A \emph{matching} $\mathcal{M}$ in $G$ is a set of vertex-disjoint edges such that $V(\mathcal{M}) = V$.

A triangle on three distinct vertices $a$, $b$, and $c$ is denoted by $\Delta(a, b, c)$.  
We say that a triangle is \emph{violating} if it violates the triangle inequality.  
A subset $V' \subseteq V$ is called a \emph{violating set} (VS) if removing $V'$ from $G$ results in a metric subgraph $G[V \setminus V']$.  
Let $p$ be the number of vertices involved in violating triangles, and $q$ the size of a minimum VS. 

For any (multi-)graph $G'=(V',E',w')$, the \emph{degree} of a vertex is the number of incident edges. 
Let $Odd(G')$ denote the set of odd-degree vertices. 
Then, $G'$ is \emph{Eulerian} if it is connected and $Odd(G')=\emptyset$.
A \emph{tour} $T=\overline{v_1v_2\ldots v_s}$ in $G'$ is a sequence of vertices (allowing repetitions), inducing the edge set $E'(T)\coloneqq\{v_1v_2,v_2v_3,\ldots,v_sv_1\}$, and weight $w'(T)\coloneqq w'(E'(T))$.
Moreover, 
\begin{itemize}
    \item $T$ is a \emph{simple tour} if it has no repeated vertices, and a \emph{TSP tour} if it further satisfies $\size{E'(T)}=\size{V'}$;
    \item $T$ is an \emph{Eulerian tour} if $E'(T)$ is exactly the set of all edges in $G'$. 
    We may use $G'$ to refer to its Eulerian tour.
\end{itemize}

Given a tour $T = \overline{...xyz...}$, removing $y$ yields a new tour $\overline{...xz...}$, called \emph{shortcutting} the vertex $y$ (or edges $xy$ and $yz$), or \emph{taking a shortcut} on $T$.
A simple tour $\overline{v_1v_2...v_s}$ in $G'$ defines an \emph{$s$-path} $A = v_1v_2...v_s$, consisting of $s{-}1$ edges in $E'(A) \coloneqq \{v_1v_2, v_2v_3, ..., v_{s-1}v_s\}$. A $1$-path is a single vertex, and a $\size{V'}$-path in $G'$ is called a \emph{TSP path}. A path is also referred to as a \emph{chain}.

Throughout the paper, let $T^*$ denote an optimal TSP tour in the graph $G$, and let $\OPT$ denote its weight, i.e., $\OPT=w(T^*)$.

In our algorithms, the term ``guess'' refers to enumerating all possible candidates and selecting the best found solution. For convenience, we may regard a set of edges as a graph and a set of trees as the union of their edge sets. We interpret such combinations as the graph formed by all involved edges.

\section{TSP Parameterized by $p$}\label{SC3}
We first define some notations. A vertex is \emph{bad} if it appears in a violating triangle, and \emph{good} otherwise. 
Let $V=V_b\cup V_g$, where $V_b$ (resp., $V_g$) denotes the set of bad (resp., good) vertices. Note that $\size{V_b}=p$. 
We assume $\min\{\size{V_g}, \size{V_b}\}\geq 3$, as otherwise one can easily obtain an FPT $\alpha$-approximation algorithm with running time $2^{\O(p)}+n^{\O(1)}$ (see the appendix~\ref{appa}).

By definition, we have the following property.
\begin{property}\label{LB0}
Every triangle containing a good vertex satisfies the triangle inequality. 
\end{property}

\subsection{The First Algorithm}\label{SC3.1}
In this subsection, we propose an FPT $(\alpha+1)$-approximation algorithm (ALG.1) with running time $2^{\O(p)} + n^{\O(1)}$.

Our ALG.1 is simple. First, it arbitrarily selects a good vertex $o$. Then, it computes an optimal TSP tour $T_b$ in $G[V_b\cup\{o\}]$ using the Dynamic Programming (DP) method~\cite{bellman1962dynamic,held1962dynamic}. Next, it computes an $\alpha$-approximate TSP tour $T_g$ in $G[V_g]$ by applying an $\alpha$-approximation algorithm for metric TSP. Finally, it constructs a TSP tour $T_1$ in $G$ by taking a shortcut on $T_b\cup T_g$.

The TSP tour $T_b$ in ALG.1 contains exactly one good vertex. This trick enables us to take a shortcut on $T_b\cup T_g$ to obtain a TSP tour in $G$ without increasing the weight.

Our ALG.1 is described in Algorithm~\ref{alg1}.

\begin{algorithm}[t]
\caption{ALG.1}
\label{alg1}
\textbf{Input:} An instance $G=(V=V_g\cup V_b, E, w)$. \\
\textbf{Output:} A feasible solution.
\begin{algorithmic}[1]
\State\label{1.1} Arbitrarily select a good vertex $o\in V_g$.

\State\label{1.2} Compute an optimal TSP tour $T_b$ in $G[V_b\cup\{o\}]$ using the DP method~\cite{bellman1962dynamic}.

\State\label{1.3} Compute an $\alpha$-approximate TSP tour $T_g$ in $G[V_g]$ by applying an $\alpha$-approximation algorithm for metric TSP.

\State\label{1.4} Take a shortcut on $T_b\cup T_g$ to obtain a TSP tour $T_1$ in $G$.
\end{algorithmic}
\end{algorithm}

\begin{lemma}\label{LB1}
It holds that $\OPT\geq w(T^*_b)$ and $\OPT\geq w(T^*_g)$, where $T^*_b$ (resp., $T^*_g$) denotes an optimal TSP tour in $G[V_b\cup\{o\}]$ (resp., $G[V_g]$).
\end{lemma}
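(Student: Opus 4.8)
The plan is to show that an optimal TSP tour $T^*$ of $G$ can be ``projected'' onto each of the two induced subgraphs $G[V_b\cup\{o\}]$ and $G[V_g]$ at no increase in weight, so that the resulting tours are feasible competitors against which $T^*_b$ and $T^*_g$ are optimal. The key observation that makes this work is Property~\ref{LB0}: every shortcut we perform involves a good vertex as the middle vertex of the replaced pair of edges, so the triangle inequality we need is always available.

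First I would establish $\OPT\geq w(T^*_g)$. Start with $T^*$ and repeatedly shortcut every vertex of $V_b$, one at a time. Each such step replaces edges $xy$ and $yz$ (with $y\in V_b$) by the edge $xz$; since at least one of $x,z$ lies in $V_g$ unless we have already deleted its bad neighbour — more carefully, we should pick the order so that the middle vertex $y$ being deleted is bad while $x$ and $z$ are its current neighbours in the tour, and invoke Property~\ref{LB0} whenever the triangle $\Delta(x,y,z)$ contains a good vertex. To guarantee the triangle always contains a good vertex, note that we can delete the bad vertices in an order and argue inductively that when $y\in V_b$ is removed, at least one endpoint among its two tour-neighbours is good — this holds because $\size{V_g}\geq 3$ and the good vertices are never removed, so by a simple counting/ordering argument (e.g. always delete a bad vertex adjacent in the current tour to a good vertex, which exists as long as both sets are nonempty) the relevant triangle is non-violating and $w(xz)\leq w(xy)+w(yz)$. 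After removing all of $V_b$ we are left with a TSP tour on $V_g$ of weight at most $\OPT$; since $G[V_g]$ is metric and $T^*_g$ is optimal there, $\OPT\geq w(T^*_g)$.

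Next, $\OPT\geq w(T^*_b)$. Now start from $T^*$ and shortcut every vertex of $V_g\setminus\{o\}$, again one at a time; here the middle vertex removed is always good, so Property~\ref{LB0} applies directly to every step and each shortcut only decreases (or preserves) the weight. We are left with a TSP tour on $V_b\cup\{o\}$ of weight at most $\OPT$, and since $T^*_b$ is an optimal TSP tour in $G[V_b\cup\{o\}]$, we conclude $\OPT\geq w(T^*_b)$.

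The main obstacle is the first bound: unlike the second case, the middle vertex being shortcut is bad, so one must argue that the other two vertices of the replaced triangle still form a non-violating triangle. The clean way to handle this is the ordering argument sketched above (always shortcut a bad vertex that currently sits between a good vertex and something else, so the triangle meets Property~\ref{LB0}); one should also confirm that this process never gets stuck, which follows from $\min\{\size{V_g},\size{V_b}\}\geq 3$ and the fact that good vertices are retained throughout, guaranteeing that at every stage some bad vertex is tour-adjacent to a good vertex until all bad vertices are gone. Everything else is routine bookkeeping about shortcutting and the optimality of $T^*_b$ and $T^*_g$ in their respective metric (or, for $G[V_b\cup\{o\}]$, arbitrary) subgraphs.
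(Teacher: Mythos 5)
Your proposal is correct and follows essentially the same argument as the paper: shortcut the good vertices of $V_g\setminus\{o\}$ (always a good middle vertex, so Property~\ref{LB0} applies directly) to get a competitor for $T^*_b$, and repeatedly shortcut bad vertices that are tour-adjacent to a good vertex — which always exist while both sets are nonempty, since good vertices are never removed — to get a competitor for $T^*_g$. The ordering argument you spell out for the second bound is exactly the (more tersely stated) mechanism in the paper's proof.
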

\begin{proof}
First, we prove $\OPT\geq w(T^*_b)$.
Given the optimal TSP tour $T^*$ in $G$, by Property~\ref{LB0}, we can directly shortcut all good vertices in $V_g\setminus\{o\}$ to obtain a TSP tour $T'$ in $G[V_b\cup\{o\}]$ with $\OPT\geq w(T')$. Since $T^*_b$ is an optimal TSP tour in $G[V_b\cup\{o\}]$, we have $\OPT\geq w(T')\geq w(T^*_b)$.

Next, we prove $\OPT\geq w(T^*_g)$.
Recall that we assume $\min\{\size{V_g}, \size{V_b}\}\geq 3$. Then there must exist a bad vertex adjacent to a good vertex in $T^*$. By Property~\ref{LB0}, we can shortcut the bad vertex via its adjacent good vertex without increasing the weight. Repeating this process for all bad vertices yields a tour $T''$ in $G[V_g]$ such that $\OPT \geq w(T'')$. Similarly, we have $\OPT\geq w(T'')\geq w(T^*_g)$.
\end{proof}

\begin{theorem}\label{t1}
For TSP parameterized by $p$, ALG.1 achieves an FPT $(\alpha+1)$-approximation with running time $2^{\O(p)}+n^{\O(1)}$.
\end{theorem}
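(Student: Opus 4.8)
The plan is to establish the approximation guarantee and the running time separately. For the guarantee, I would first check that ALG.1 is well-defined: every triangle inside $G[V_g]$ has all three vertices good, so Property~\ref{LB0} shows $G[V_g]$ satisfies the triangle inequality and is therefore metric; hence Step~\ref{1.3} legitimately applies a metric-TSP $\alpha$-approximation and returns a tour $T_g$ with $w(T_g)\le \alpha\cdot w(T^*_g)$. Also, $T_b$ is computed optimally by the Held--Karp dynamic program, so $w(T_b)=w(T^*_b)$.

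The core of the argument is to show that Step~\ref{1.4} loses nothing, i.e.\ $w(T_1)\le w(T_b)+w(T_g)$. Since $T_b$ and $T_g$ both pass through $o$, the multigraph $T_b\cup T_g$ is connected, and every vertex has even degree ($o$ has degree $4$, all others degree $2$), so it is Eulerian and admits a closed walk through all vertices in which $o$ occurs exactly twice and every other vertex exactly once (concretely, traverse $T_b$ from $o$ back to $o$ and then $T_g$ from $o$ back to $o$). Shortcutting the second occurrence of $o$ turns this walk into a TSP tour $T_1$ of $G$: locally the walk reads $\overline{\ldots u\,o\,v\ldots}$ there, and the shortcut replaces the two edges $uo,ov$ by the single edge $uv$. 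Because $o$ is good, the triangle $\Delta(u,o,v)$ obeys the triangle inequality by Property~\ref{LB0}, so $w(uv)\le w(uo)+w(ov)$ and the shortcut does not increase the weight; thus $w(T_1)\le w(T_b\cup T_g)=w(T_b)+w(T_g)$. I expect this to be the only delicate point, and it is exactly where the design decision of letting $T_b$ contain a single good vertex pays off: it forces the unavoidable ``merging'' shortcut to pass through a good vertex, which is the only situation in which Property~\ref{LB0} applies — with two or more good vertices in $T_b$, the corresponding shortcuts need not be weight-nonincreasing.

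Combining these bounds with Lemma~\ref{LB1} (which gives $w(T^*_b)\le\OPT$ and $w(T^*_g)\le\OPT$) yields
\[
w(T_1)\;\le\; w(T_b)+w(T_g)\;=\;w(T^*_b)+w(T_g)\;\le\;w(T^*_b)+\alpha\, w(T^*_g)\;\le\;(\alpha+1)\,\OPT,
\]
which is the claimed ratio.

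For the running time, Steps~\ref{1.1} and~\ref{1.4} are polynomial; Step~\ref{1.2} runs the Held--Karp dynamic program on $G[V_b\cup\{o\}]$, a graph with $p+1$ vertices, in $2^{\O(p)}\cdot p^{\O(1)}=2^{\O(p)}$ time; and Step~\ref{1.3} runs a polynomial-time metric-TSP approximation in $n^{\O(1)}$ time. Adding these up gives a total running time of $2^{\O(p)}+n^{\O(1)}$, completing the proof.
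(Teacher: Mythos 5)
Your proof is correct and follows essentially the same route as the paper: concatenate $T_b$ and $T_g$ at the common good vertex $o$ and shortcut its duplicate occurrence, invoking Property~\ref{LB0} on the triangle through $o$ to get $w(T_1)\le w(T_b)+w(T_g)$, then combine with Lemma~\ref{LB1} and the DP/metric-approximation running-time bounds. Your extra checks (that $G[V_g]$ is metric and that the merge is the only shortcut needed) are sound elaborations of the same argument, not a different approach.
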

\begin{proof}
First, we analyze the solution quality of ALG.1.

Since $T_1$ is obtained by taking a shortcut on $T_b\cup T_g$, we assume that $T_1=\overline{ou_1...u_pv_1...v_{n-p-1}}$, where $T_b=\overline{ou_1...u_p}$ and $T_g=\overline{ov_1...v_{n-p-1}}$. Since $o$ is a good vertex, the triangle $\Delta(o,v_1,u_p)$ satisfies the triangle inequality by Property~\ref{LB0}, which implies $w(u_p,o)+w(o,v_1)\geq w(u_p,v_1)$. Thus, we have $w(T_b)+w(T_g)\geq w(T_1)$. Since $T_b$ (resp., $T_g$) is an optimal (resp., $\alpha$-approximate) TSP tour in $G[V_b\cup\{o\}]$ (resp., $G[V_g]$), by Lemma~\ref{LB1}, we have $(\alpha+1)\cdot\OPT\geq w(T_1)$.

Next, we analyze the running time of ALG.1. 

Steps~\ref{1.1}, \ref{1.3}, and \ref{1.4} take polynomial time, i.e., $n^{\O(1)}$, while Step~\ref{1.2} is dominated by the computation of an optimal TSP tour. 
Since $\size{V_b}+1=p+1$, the optimal TSP tour can be computed in $2^{\O(p)}$ time using the DP method~\cite{bellman1962dynamic}. Thus, the overall running time is $2^{\O(p)}+n^{\O(1)}$.
\end{proof}

By Theorem~\ref{t1}, using the $(1.5-10^{-36})$-approximation algorithm~\cite{DBLP:conf/stoc/KarlinKG21} for metric TSP, ALG.1 achieves an FPT approximation ratio of $2.5-10^{-36}$, improving the algorithm in~\cite{arxiv24} in both approximation ratio and running time.

\subsection{The Second Algorithm}\label{SC3.2}
In this subsection, we propose an FPT $1.5$-approximation algorithm (ALG.2) with running time $2^{\O(p\log p)}\cdot n^{\O(1)}$.

ALG.2 proceeds as follows. 
To handle bad vertices, it first guesses the subgraph $T^*[V_b]$, i.e., the subgraph of the optimal TSP tour $T^*$ induced by $V_b$. This forms a set of paths on bad vertices, called \emph{bad chains} $\A$. 

ALG.2 then augments $\A$ into a TSP tour in $G$.
It first constructs a minimum-weight \emph{constrained spanning tree (CST)} $F_\A$ in $G$, i.e., a tree such that $E(\A)\subseteq E(F_\A)$, each $i$-degree vertex in $\A$ is incident to exactly $i$ bad vertices in $F_\A$, and each $2$-degree vertex in $\A$ remains of degree 2 in $F_\A$.
It then adds a set of edges $\M_\A$ to fix the parity of $Odd(F_\A)$ and computes a TSP tour $T_2$ in $G$ by shortcutting on $F_\A \cup M_\A$. 

Algorithm~\ref{alg2} gives full details, with an example in Figure~\ref{fig1}.\footnote{Strictly speaking, Steps~\ref{2.2}-\ref{2.7} are conceptually nested within Step~\ref{2.1}, as they depend on the guessed set $\A$ in Step~\ref{2.1}. That is, the algorithm enumerates all possible choices of $\A$, and for each choice, it executes Steps~\ref{2.2}-\ref{2.7}. However, for clarity and conciseness, we present the algorithm in a sequential manner.}


\begin{algorithm}[t]\label{tree+matching}
\caption{ALG.2}
\label{alg2}
\textbf{Input:} An instance $G=(V=V_g\cup V_b, E, w)$. \\
\textbf{Output:} A feasible solution.
\begin{algorithmic}[1]
\State\label{2.1} Guess the set of bad chains $\A$ in $T^*$.

\State\label{2.2} Construct a complete graph $\widetilde{G}=(V_\A\cup V_g,\widetilde{E},\widetilde{w})$ by contracting each $A\in\A$ into a vertex $u_A$, where $V_\A$ is the set of contracted vertices, and $\widetilde{w}$ is defined as follows:
(1) $\widetilde{w}(u_A,v)=\min\{w(u,v),w(u',v)\}$ for any $A=u...u'\in\A$ and $v\in V_g$; (2) $\widetilde{w}(v,v')=w(v,v')$ for any $v,v'\in V_g$; (3) $\widetilde{w}(u_A,u_{A'})=+\infty$ for any $A, A'\in\A$.

\State\label{2.3} Find a minimum-weight spanning tree $\widetilde{F}$ in $\widetilde{G}$, and obtain the corresponding set of edge $F$ in $G$ with $w(F)=\widetilde{w}(\widetilde{F})$. 

\State\label{2.4} Let $F_\A=F\cup \A$, which forms a CST in $G$.

\State\label{2.5} Construct a complete graph $G'=(V'=Odd(F_\A), E',w')$: For any $u,u'\in V'$, set $w'(u,u')=w(A)$ if $u...u'=A\in\A$, and $w'(u,u')=w(u, u')$ otherwise.

\State\label{2.6} Find a minimum-weight matching $\M'_\A$ in $G'$, and obtain the corresponding set of edges $\M_\A$ in $G$ with $w(M_\A)=w'(M'_\A)$.

\State\label{2.7} Obtain a TSP tour $T_2$ in $G$ (see Lemma~\ref{LB2.3}).
\end{algorithmic}
\end{algorithm}

\begin{figure}[t]
    \centering
    \resizebox{0.88\linewidth}{!}{
    \begin{subfigure}[b]{0.495\linewidth}
        \centering
        \includegraphics[width=0.8\linewidth]{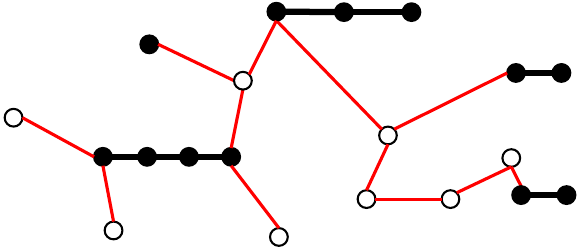}
        \caption{$F_\A=F\cup\A$.}
    \end{subfigure}
    \hfill
    \begin{subfigure}[b]{0.495\linewidth}
        \centering
        \includegraphics[width=0.8\linewidth]{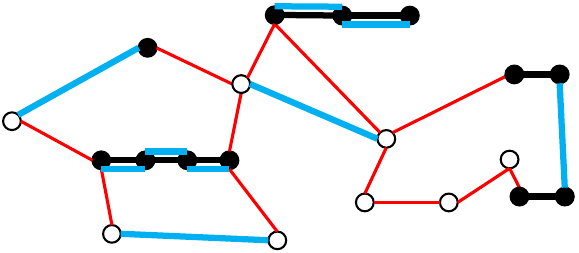}
        \caption{$F_\A\cup M_\A$.}
    \end{subfigure}
    }

    \vspace{1mm}

    \resizebox{0.88\linewidth}{!}{
    \begin{subfigure}[b]{0.495\linewidth}
        \centering
        \includegraphics[width=0.85\linewidth]{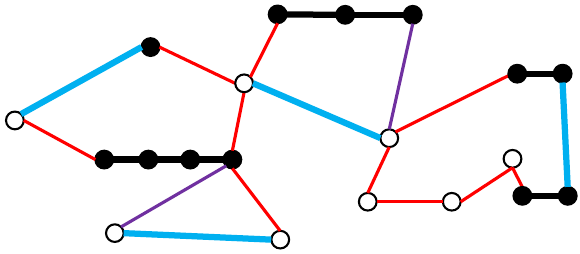}
        \caption{$T'_\A$.}
    \end{subfigure}
    \hfill
    \begin{subfigure}[b]{0.495\linewidth}
        \centering
        \includegraphics[width=0.85\linewidth]{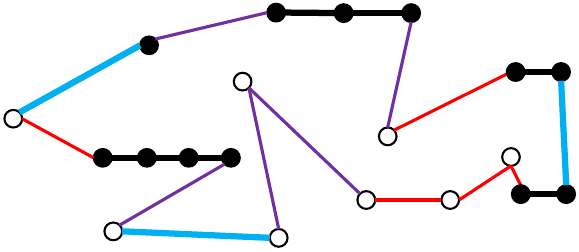}
        \caption{$T_2$.}
    \end{subfigure}
    }
    \caption{An illustration of our ALG.2, where each black (resp., white) node denotes a bad (resp., good) vertex. 
    In (a), the edges in $E(F)$ (resp., $E(\A)$) are shown in red (resp., black); In (b), the edges in $\M_\A$ are shown in blue; In (c), $\T'_\A$ is obtained by shortcutting repeated edges on the paths in $\A\cap\A'$ (see Lemma~\ref{LB2.3}), with the new edges shown in purple; In (d), the TSP tour $T_2$ is obtained by taking shortcuts on $\T'_\A$ (see Lemma~\ref{LB2.3}), with the new edges also shown in purple.}
    \label{fig1}
\end{figure}

We first compare our ALG.2 with previous algorithms.

The previous FPT 3-approximation algorithm~\cite{ZhouLG22} and $2.5$-approximation algorithm~\cite{arxiv24} follow a similar framework: they guess the set of bad chains $\A$, extract a set $V'_b$ of endpoint vertices from $\A$, and then find a minimum-weight spanning $\size{\A}$-forest $\F$ in $G[V'_b \cup V_g]$~\cite{khachay2016approximability}, where each vertex $b_i \in V'_b$ lies in a different tree. By adding a set of edges $E'$, they then form a TSP tour on $V_b$. 
The algorithm in~\cite{ZhouLG22} takes shortcuts on $\A \cup E' \cup 2\F$ for a $3$-approximation, while the algorithm in~\cite{arxiv24} uses a matching $\M$ on $Odd(\A \cup E' \cup \F)$ to shortcut $\A \cup E' \cup \F\cup\M$ to achieve a $2.5$-approximation.

In contrast, ALG.2 computes a CST $F_\A$ in $G$, avoiding the need for $E'$. However, $G[Odd(F_\A)]$ may be non-metric, so we cannot simply use a minimum-weight matching~\cite{lawler1976combinatorial}. 
We build an auxiliary graph $G'$ (Step~\ref{2.5}) and find a matching $\M'_\A$ in $G'$, then shortcut on $F_\A \cup M_\A$ to obtain a TSP tour, where $\M_\A$ is the corresponding set of edges in $G$.

Hence, ALG.2 introduces key differences and new ideas, requiring refined analysis for the final approximation ratio.

\begin{lemma}\label{LB2.1}
ALG.2 computes a CST $F_\A$ with $w(F_\A)\leq\OPT$, and a set of edges $\M_\A$ with $w(M_\A)\leq\frac{1}{2}\cdot\OPT$.
\end{lemma}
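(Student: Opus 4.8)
The plan is to establish the two bounds separately, both by exhibiting, from the optimal tour $T^*$, feasible structures in the auxiliary graphs $\widetilde{G}$ and $G'$ whose weights are controlled by $\OPT$, and then invoking optimality of the minimum spanning tree $\widetilde{F}$ and the minimum matching $\M'_\A$ respectively. Recall that $\A$ has been guessed correctly, so $\A = T^*[V_b]$; in particular the edges of $\A$ are edges used by $T^*$, and every endpoint of a bad chain is joined in $T^*$ to a good vertex (using $\size{V_g}\geq 3$ and that no bad--bad edge outside $\A$ is in $T^*$).

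For the CST bound $w(F_\A)\le\OPT$: since $w(F) = \widetilde w(\widetilde F)$ and $F_\A = F \cup \A$, and the edges of $\A$ are disjoint from $F$, it suffices to show $\widetilde w(\widetilde F) \le \OPT - w(\A)$. To this end I would build a spanning tree (or connected spanning subgraph, from which a cheaper spanning tree exists) of $\widetilde G$ out of $T^*$: contract each chain $A\in\A$ to its vertex $u_A$ in $T^*$, and delete one edge of the resulting closed walk to get a spanning tree $\widetilde T$ of $\widetilde G$ on $V_\A \cup V_g$. Each edge of $T^*$ that is not internal to a bad chain becomes, after contraction, either a good--good edge $vv'$ (weight unchanged, $\widetilde w(v,v') = w(v,v')$) or an edge incident to some $u_A$; in the latter case the original $T^*$-edge is $uv$ or $u'v$ with $u,u'$ the endpoints of $A$ and $v$ good, and $\widetilde w(u_A,v) = \min\{w(u,v),w(u',v)\}$ is no larger. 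Note no $u_A u_{A'}$ edge arises since in $T^*$ two bad chains are never adjacent, so the $+\infty$ weights are never used. Hence $\widetilde w(\widetilde T) \le w(T^*) - w(\A) = \OPT - w(\A)$, and therefore $w(F_\A) = w(F) + w(\A) = \widetilde w(\widetilde F) + w(\A) \le \widetilde w(\widetilde T) + w(\A) \le \OPT$.

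For the matching bound $w(M_\A)\le\frac12\OPT$: I want two perfect matchings on $V' = Odd(F_\A)$ in the auxiliary graph $G'$ whose combined $w'$-weight is at most $\OPT$. The standard device is to decompose $T^*$ into edge sets that, after contracting each bad chain appropriately, each cover $V'$. Concretely, order the vertices of $V'$ around $T^*$; this splits $T^*$ into arcs between consecutive $V'$-vertices, and taking alternate arcs gives two sets $P_1, P_2$ of vertex-disjoint paths that together form $T^*$, each pairing up the vertices of $V'$. Each such path from $u$ to $u'$ in $P_i$ either traverses some bad chains or not; using the triangle inequality among good vertices (Property~\ref{LB0}) to shortcut good interior vertices, and using the definition $w'(u,u') = w(A)$ when $u,u'$ are the two endpoints of a chain $A$ and $w'(u,u')=w(u,u')$ otherwise, one shows the $w'$-weight of the edge in $G'$ joining the two $V'$-endpoints of the path is at most the $T^*$-weight of that path. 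Summing, $w'(M_1) + w'(M_2) \le w(T^*) = \OPT$ for the two induced matchings $M_1, M_2$ of $G'$; since $\M'_\A$ is minimum, $w'(M'_\A) \le \frac12(w'(M_1)+w'(M_2)) \le \frac12\OPT$, and $w(M_\A) = w'(M'_\A)$.

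The main obstacle I anticipate is the bookkeeping in the matching argument: one must carefully verify that $V' = Odd(F_\A)$ really is even-sized (so the decomposition into two matchings makes sense), and that when a $T^*$-path between two $V'$-vertices passes through a bad chain $A$ whose endpoints are \emph{not} both in $V'$, the shortcutting still produces a $G'$-edge of no greater weight — in particular that the $w'(u,u')=w(A)$ clause is only needed when both chain endpoints lie in $V'$, and that otherwise ordinary shortcutting through the adjacent good vertices suffices (which is where one genuinely uses that chain endpoints are adjacent to good vertices in $T^*$). The CST part is comparatively routine once the contraction is set up correctly and one checks the degree constraints defining a CST are automatically met by $F_\A = F\cup\A$ with $\widetilde F$ spanning.
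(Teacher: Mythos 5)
Your proposal is correct and takes essentially the same route as the paper: the CST bound comes from comparing the minimum spanning tree of $\widetilde{G}$ with the contracted image of $T^*$ minus one edge (the paper phrases this via minimum-weight CSTs and deleting a good--bad edge of $T^*$), and the matching bound comes from shortcutting $T^*$ via good neighbors (Property~\ref{LB0}) into a structure on $Odd(F_\A)$ in $G'$ that decomposes into two perfect matchings. The bookkeeping you flag is benign and is the same implicit step behind the paper's claim $w(T)=w'(T')$: a matched pair can be the two endpoints of a chain only when its associated arc is that chain itself, since otherwise $Odd(F_\A)$ would consist of exactly those two endpoints, which cannot happen because $\widetilde{F}$ has at least two leaves and each leaf forces an odd-degree good vertex or an odd-degree endpoint of another chain into $Odd(F_\A)$.
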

\begin{proof}
We begin by proving that ALG.1 computes a valid CST $F_\A$ with $w(F_\A)\leq\OPT$.

Since $\A$ is a set of paths (and thus a set of trees), we regard $\A$ as a graph.
For each $i\in\{0,1,2\}$, denote by $V^i_b$ the set of bad vertices of degree $i$ in the graph $\A$.

First, we show that $F_\A$ is a CST in $G$.
By Step \ref{2.2} of ALG.2, the tree $F$ in $G$, which corresponds to the tree $\widetilde{F}$ in $\widetilde{G}$, consists only of edges between good vertices and edges between one good vertex and one bad vertex in $V^0_b\cup V^1_b$. Moreover, since $\widetilde{F}$ is a spanning tree in $\widetilde{G}$, the tree $F_\A=F\cup\A$ is a valid CST in $G$ by definition.

Then, we show that $F_\A$ is a minimum-weight CST in $G$.
Consider any minimum-weight CST $F'$ in $G$. By definition, $E(\A)\subseteq E(F')$, and the set $E(F')\setminus E(\A)$ consists only of edges between good vertices and edges between one good vertex and one bad vertex in $V^0_b\cup V^1_b$. Thus, by Step \ref{2.2} of ALG.2, $E(F')\setminus E(\A)$ corresponds to a spanning tree $\widetilde{F}'$ in $\widetilde{G}$ with 
\[
\widetilde{w}(\widetilde{F}')\leq w(F')-w(\A).
\]
Since $F_\A$ satisfies $w(F_\A)=\widetilde{w}(\widetilde{F})+w(\A)$ and $\widetilde{F}$ is a minimum-weight spanning tree in $\widetilde{G}$, we have 
\begin{align*}
w(F_\A)&=\widetilde{w}(\widetilde{F})+w(\A)\leq \widetilde{w}(\widetilde{F}') + w(\A)\leq w(F').   
\end{align*}
Thus, $F_\A$ is a minimum-weight CST in $G$.



Finally, by arbitrarily deleting an edge between one good vertex and one bad vertex in $E(T^*)$, we obtain a CST in $G$ with weight at most $\OPT$. 
Thus, we have $w(F_\A)\leq \OPT$.

Now, we prove that ALG.1 computes a matching $\M_\A$ with $w(M_\A)\leq\frac{1}{2}\cdot\OPT$.

By the definition of $\F_\A$, each $2$-degree vertex in the graph $\A$ remains a $2$-degree vertex in $F_\A$. Thus, 
\[
Odd(F_\A)\subseteq V^0_b\cup V^1_b\cup V_g.
\]

Next, we try to shortcut vertices not in $Odd(F_\A)$ on the optimal TSP tour $T^*$ with a non-increasing weight. 

First, we consider vertices in $V_b$.
For any bad chain $u...u'\in\A$, if at least one terminal is not in $Odd(F_\A)$, say $u$, there must exist a good vertex $v$ in $T^*$ that is incident to $u$, and then we can use $v$ to repeatedly shortcut all vertices on $u...u'$ that are not in $Odd(F_\A)$ without increasing the weight by Property~\ref{LB0}; if both terminals are in $Odd(F_\A)$, we do not shortcut the vertices on $u...u'$.

Then, we consider vertices in $V_g$.
By Property~\ref{LB0}, we can directly shortcut all vertices in $V_g$ that are not in $Odd(F_\A)$ without increasing the weight. 

Therefore, we can obtain a simple tour $T$ with $w(T)\leq\OPT$, which consists of all vertices in $Odd(F_\A)$ as well as all bad vertices in $V^2_b$ that are part of bad chains where both terminals belong to $Odd(F_\A)$. 

By Step \ref{2.5} of ALG.2, the tour $T$ corresponds to a TSP tour $T'$ in $G'$, where $w(T)=w'(T')$. 
Clearly, $\size{Odd(F_\A)}$ is even.
Thus, $T'$ can be decomposed into two edge-disjoint matchings in $G'$. 
Since $\M'_\A$ is a minimum-weight matching in $G'$, we must have $w'(M'_\A)\leq\frac{1}{2}\cdot w'(T')=\frac{1}{2}\cdot w(T)\leq\frac{1}{2}\cdot\OPT$.

Finally, since $\M_\A$ is the set of edges in $G$ corresponding to $M'_\A$, by Step \ref{2.6} of ALG.2, we have $w(M_\A)=w'(\M'_\A)\leq\frac{1}{2}\cdot\OPT$.
\end{proof}

\begin{lemma}\label{LB2.3}
In Step~\ref{2.6} of ALG.2, a TSP tour $T_2$ in $G$ can be found in $\O(n)$ time with weight at most $\frac{3}{2}\cdot\OPT$. 
\end{lemma}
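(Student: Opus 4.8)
The plan is to show that $F_\A \cup M_\A$ can be turned into a TSP tour in $G$ of weight at most $w(F_\A)+w(M_\A) \le \tfrac{3}{2}\cdot\OPT$, following the classical Christofides–Serdyukov template but carefully handling the non-metric edges. First I would observe that, by construction, $\M'_\A$ is a matching on $V' = Odd(F_\A)$, so in $G'$ the graph $F_\A \cup M'_\A$ has all even degrees and is connected; hence it is Eulerian. However, some edges of $M'_\A$ correspond not to single edges of $G$ but to entire bad chains $A \in \A$ (those $A$ whose two terminals both lie in $Odd(F_\A)$), and such a chain $A$ may already appear inside $F_\A$. So when we pull $F_\A \cup M_\A$ back to $G$, we get a multigraph $\A \cup \A' \cup F \cup (M_\A\setminus\A')$, where $\A' \subseteq \A$ is the set of chains used twice; this multigraph is connected and has a closed walk traversing exactly its edges, whose total weight is $w(F_\A)+w(M_\A)$.

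Next I would describe the shortcutting in two stages, matching the figure. In the first stage, form the tour $T'_\A$ by, for each doubled chain $A = u\ldots u' \in \A'$, replacing the second copy of the path $u\ldots u'$ by the single direct edge $uu'$; this is exactly the meaning of $w'(u,u') = w(A)$ in Step~\ref{2.5}, so $w(T'_\A) \le w(F_\A)+w(M_\A)$ and $T'_\A$ is still a (connected, even-degree) closed walk. Crucially, the shortcut edge $uu'$ lies between two bad vertices, so Property~\ref{LB0} does not apply to it — but we are not shortcutting *through* a good vertex here, we are collapsing a repeated traversal of a path into the straight edge $uu'$, and by the triangle inequality applied iteratively along the path $u\ldots u'$ we have $w(u,u') \le w(A)$... wait, that uses the triangle inequality on bad vertices, which we cannot assume. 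Instead, the correct justification is that we simply *define* the replacement edge to have weight $w(A)$ by physically re-using the chain $A$ in the walk: the closed walk $T'_\A$ literally re-traverses the edges of $A$, and its weight accounting is $w'(M'_\A)$ by the definition in Step~\ref{2.5}. So no triangle inequality on bad vertices is needed — the walk just happens to repeat some edges, and we keep them.

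In the second stage, I would shortcut all *repeated vertices* on the closed walk $T'_\A$ to obtain a genuine TSP tour $T_2$. The key point, and what I expect to be the main obstacle, is arguing that every such shortcut is weight-non-increasing despite the presence of bad vertices. The resolution uses the structure carefully: every vertex that appears more than once on $T'_\A$ must, at the moment we shortcut it, have a good vertex as one of its two neighbors in the local configuration $\overline{xyz}$ — or more precisely, one can always choose the order of shortcuts so that the vertex $y$ being removed is good, OR is a bad vertex all of whose incident walk-edges go to good vertices. For a bad chain $A = u\ldots u'$, its internal (degree-2) vertices appear exactly once in $F_\A$ and, if $A \notin \A'$, exactly once in $T'_\A$, so they never need shortcutting; if $A \in \A'$, the straight edge $uu'$ has already removed the internal vertices' second appearance. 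The only vertices appearing multiply are good vertices (from $F$ and $M_\A$) and chain terminals in $Odd(F_\A)$; for a chain terminal $u$ that repeats, its incident edges in $F_\A \cup M_\A$ other than the chain-edge must go to good vertices (since $F$ touches bad vertices only via good vertices, and $M_\A\setminus\A'$ likewise), so Property~\ref{LB0} licenses the shortcut. I would make this precise by describing the explicit rule: repeatedly pick a repeated vertex, and if it is a chain terminal, shortcut the occurrence whose two neighbors are both good (such an occurrence exists), else shortcut any occurrence; each step is non-increasing by Property~\ref{LB0}. This yields $w(T_2) \le w(T'_\A) \le w(F_\A)+w(M_\A) \le \OPT + \tfrac12\OPT = \tfrac32\OPT$. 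Finally, the running time: building the Eulerian walk, the first-stage chain collapses, and the vertex shortcuts are all linear in the number of edges, which is $\O(n)$, giving the claimed $\O(n)$ bound.
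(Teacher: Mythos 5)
Your proposal correctly identifies the central obstacle of this lemma --- that a chain $A=ab\ldots b'a'\in\A\cap\A'$ is traversed twice in the Eulerian tour of $F_\A\cup M_\A$, and that collapsing the second traversal to the direct edge $aa'$ would require the triangle inequality on bad vertices, which is unavailable --- but you do not actually overcome it. Your stage-1 ``resolution'' is to abandon the direct edge and simply keep both traversals of the chain; yet your stage 2 then asserts that ``the straight edge $uu'$ has already removed the internal vertices' second appearance,'' which is precisely the step you just disavowed. With both traversals kept, every internal vertex of such a chain appears twice in the walk, and all four of its incident walk-edges go to bad vertices, so no occurrence of it can be shortcut via Property~\ref{LB0}. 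Your rule for repeated chain terminals (``shortcut the occurrence whose two neighbors are both good'') does not help either: even after that, the walk still contains the segment $\overline{\ldots b\,a\,b\ldots}$ formed by the two copies of the chain edge at $a$, so the doubled chain survives and $T_2$ is not a TSP tour.

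The paper's proof closes exactly this gap with an argument you are missing. Since $G_\A=F_\A\cup M_\A$ is connected with all degrees even, and each terminal $a$ (resp.\ $a'$) of a doubled chain has only the one bad neighbor $b$ (resp.\ $b'$), at least one terminal, say $a$, must be incident to at least two good vertices --- otherwise the doubled chain would form its own connected component. One then peels off an entire traversal of the chain by repeatedly shortcutting, at a good neighbor $v_1$ of $a$, the bad vertex currently adjacent to $v_1$ on the walk: each such shortcut replaces $\overline{v_1 y z}$ by $\overline{v_1 z}$, and the triangle $\Delta(v_1,y,z)$ contains the good vertex $v_1$, so Property~\ref{LB0} applies even though $y$ and $z$ are both bad. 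This reduces the walk to a single traversal of the chain (entered directly from $v_1$ at $a'$), after which repeated bad vertices each have a good neighbor at all but one occurrence and repeated good vertices are shortcut as usual. You should add this connectivity/parity argument and the ``peel from a good neighbor'' mechanism; without them the proof does not go through.
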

\begin{proof}
Clearly, $F_\A\cup M_\A$ forms an Eulerian graph $G_\A$ with $V(G_\A)=V$ and $\O(n)$ edges. Hence, an Eulerian tour $T_\A$ can be computed in $\O(n)$ time~\cite{cormen2022introduction}. 

Note that $w(T_\A)=w(F_\A)+w(M_\A)\leq\frac{3}{2}\cdot\OPT$. To prove the lemma, it suffices to show how to compute in $\O(n)$ time a TSP tour $T_2$ in $G$ with weight at most $T_\A$.

We first observe that 
\begin{enumerate}
    \item[(1)] the edges between bad vertices in $E(F_\A)$ are exactly those in $E(\A)$;
    \item[(2)] the edges between bad vertices in $E(M_\A)$ form a set of vertex-disjoint paths $\A'$ since $\M'_\A$ is a matching in $G'$.
\end{enumerate}
By Step~\ref{2.5} of ALG.2, we further know that 
\begin{enumerate}
    \item[(3)] for each path $A\in \A'$, either $A\in\A$, or $A$ is edge-disjoint from all paths in $\A$.
\end{enumerate}
Thus, the edges on the paths in $\A\cap\A'$ (resp., $\A\setminus\A'$) appear exactly twice (resp., once) in $T_\A$ (see (b) in Figure~\ref{fig1}).

Next, we shortcut the repeated edges on the paths in $\A\cap\A'$.
Consider any path $ab...b'a'\in\A\cap\A'$. 
By (1), (2), and (3), in $V_b$, there is only one vertex $b$ (resp., $b'$) adjacent to $a$ (resp., $a'$). Since $G_\A$ is also Eulerian, $G_\A$ is connected, and both $a$ and $a'$ have even degree. Hence, either $a$ or $a'$, say $a$, must be adjacent to at least two good vertices. 
Thus, we can use a good neighbor of $a$ to shortcut the repeated edges on the path $ab...b'a'$ without increasing the weight by Property~\ref{LB0}.

Denote the new Eulerian tour by $T'_\A$.
We know that
\begin{enumerate}
    \item[(4)] each edge between bad vertices appears once, and each bad vertex is incident to at most two bad vertices (see (c) in Figure~\ref{fig1}).
\end{enumerate}
 
Then, we shortcut the repeated bad vertices in $T'_\A$.
Suppose a bad vertex $a$ appears at least twice in $T'_\A$. 
By (4), each appearance of $a$, except for one, must be adjacent to at least one good vertex. 
Thus, each of these can be shortcut via a good neighbor without increasing the weight by Property~\ref{LB0}.

Last, repeated good vertices can clearly be shortcut without increasing the weight. Hence, given $\T_\A$, a TSP tour $T_2$ in $G$ can be found with weight at most $w(\T_\A)$ in $\O(n)$ time.
\end{proof}

\begin{theorem}\label{t2}
For TSP parameterized by $p$, ALG.2 achieves an FPT $1.5$-approximation ratio with running time $2^{\O(p\log p)}\cdot n^{\O(1)}$.
\end{theorem}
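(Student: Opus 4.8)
The plan is to combine the three lemmas already established with a careful accounting of the running time. For the \textbf{approximation ratio}, observe that in the correct guess of the bad chains $\A=T^*[V_b]$, Lemma~\ref{LB2.1} gives a CST $F_\A$ with $w(F_\A)\le\OPT$ and a set of edges $\M_\A$ with $w(\M_\A)\le\frac12\OPT$; Lemma~\ref{LB2.3} then converts $F_\A\cup\M_\A$ into a genuine TSP tour $T_2$ in $G$ with $w(T_2)\le w(F_\A)+w(\M_\A)\le\frac32\OPT$. Since ALG.2 enumerates all possible $\A$ and retains the best solution, the tour it outputs has weight at most $\frac32\OPT$, which is the claimed $1.5$-approximation. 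I would state this explicitly, emphasizing that all steps in Lemmas~\ref{LB2.1} and \ref{LB2.3} only require the correctness of the \emph{guess}, so only one branch of the enumeration needs to succeed.

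For the \textbf{running time}, I would bound the number of guesses for $\A$ in Step~\ref{2.1}. The set $\A$ is the subgraph of $T^*$ induced by $V_b$, which is a union of vertex-disjoint paths (each bad vertex has degree at most $2$) on the $p$ bad vertices. The number of such path systems is at most $p^{\O(p)}=2^{\O(p\log p)}$: one can bound it by choosing, for each bad vertex, which (ordered) pair of bad vertices it is adjacent to along $T^*$, or more simply by noting that the number of ways to partition $p$ labeled elements into ordered paths is at most $p!\cdot 2^p$. For each guess, Steps~\ref{2.2}--\ref{2.7} run in polynomial time: building $\widetilde G$ and $G'$ is $n^{\O(1)}$, a minimum spanning tree (Step~\ref{2.3}) and a minimum-weight matching (Step~\ref{2.6}) are polynomial, and Lemma~\ref{LB2.3} explicitly gives an $\O(n)$-time construction of $T_2$. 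Multiplying, the total running time is $2^{\O(p\log p)}\cdot n^{\O(1)}$.

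The final piece is to handle the assumption $\min\{|V_g|,|V_b|\}\ge 3$ under which the whole section operates: I would note that the degenerate cases are dispatched by the appendix as an FPT $\alpha$-approximation within the stated time budget, so they do not affect the overall claim. I would also remark that validity of the output — namely that $T_2$ is a TSP tour in $G$ whenever $F_\A$ is a valid CST — is guaranteed by Lemma~\ref{LB2.3}, and that on an incorrect guess the branch still produces \emph{some} feasible tour (or can simply be discarded), so taking the minimum over all branches is well defined.

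I expect the \textbf{main obstacle} to be the bound on the number of guesses and, more subtly, making precise what "guessing $T^*[V_b]$" means: one must verify that the objects ALG.2 actually enumerates — sets of vertex-disjoint paths on $V_b$, each path later treated as contracted/uncontracted consistently — form a family of size $2^{\O(p\log p)}$ and that this family contains the true $T^*[V_b]$. Everything else is bookkeeping: the weight inequalities are already packaged in Lemmas~\ref{LB2.1} and \ref{LB2.3}, and the polynomial subroutines are standard.
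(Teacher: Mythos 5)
Your proposal is correct and follows essentially the same route as the paper: the approximation ratio is obtained by combining Lemmas~\ref{LB2.1} and \ref{LB2.3} on the correct guess of $\A$, and the running time follows from the $\O(2^p\cdot p!)\subseteq 2^{\O(p\log p)}$ bound on the number of bad-chain structures times a polynomial (matching-dominated) cost per guess. Your extra remarks on degenerate cases and on discarding incorrect guesses are harmless bookkeeping that the paper leaves implicit.
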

\begin{proof}
For the running time, there are $\O(2^p\cdot p!)\subseteq 2^{\O(p\log p)}$ possible structures for the set of bad chains $\A$~\cite{arxiv24}. Hence, Step~\ref{2.1} involves $2^{\O(p\log p)}$ guesses. For each guess, Steps \ref{2.2}-\ref{2.7} take $\O(n^3)$ time, dominated by the computation of a minimum-weight matching~\cite{lawler1976combinatorial}. 
Thus, the overall running time is $2^{\O(p\log p)}\cdot n^{\O(1)}$.
For the solution quality, by Lemma \ref{LB2.3}, ALG.2 achieves an approximation ratio of $1.5$.
This completes the proof.
\end{proof}

\subsection{The Third Algorithm}
In this subsection, we propose an $(\alpha+\varepsilon)$-approximation algorithm (ALG.3) with running time $n^{\O(p/\varepsilon)}$ for any constant $\varepsilon>0$.

ALG.3 is based on an $(\alpha+\varepsilon)$-approximation algorithm for the metric $k$-TSP path ($k$-TSPP): given a metric graph with $n$ vertices and $k$ vertex pairs $(s_1,t_1),...,(s_k,t_k)$, the objective is to find a path between $s_i$ and $t_i$ for each $i$ such that the total weight of these $k$ paths is minimized and every vertex of the graph is covered by at least one path. For metric $k$-TSPP, a 3-approximation algorithm is well-known, while B{\"{o}}hm \el~\cite{DBLP:conf/soda/0001FMS25} proposed an algorithm with an approximation ratio of $1+2\cdot e^{-1/2}<2.214$. Moreover, Traub \el~\cite{DBLP:journals/siamcomp/TraubVZ22} proposed an $(\alpha+\varepsilon)$-approximation algorithm for $\Phi$-TSP with bounded \emph{interface size}. If the interface size is $s$, the running time of their algorithm is $n^{\O(s/\varepsilon)}$.
It can be verified that metric $k$-TSPP is a special case of $\Phi$-TSP with an {interface size} of at most $2k$. (The interface size is $2k$ when $s_i\neq t_j$ for all $1\leq i\leq j\leq k$.)
Hence, metric $k$-TSPP admits an $(\alpha+\varepsilon)$-approximation algorithm with running time $n^{\O(k/\varepsilon)}$.

Our ALG.3 simply reduces TSP parameterized by $p$ to metric TSPP. 

Suppose that $T^*=\overline{s_1a_1...b_1t_1...s_2a_2...b_2t_2...s_\ell a_\ell...b_\ell t_\ell...}$, where $V'_g=\{s_i,t_i\mid 1\leq i\leq\ell\}$ are the good vertices that are incident to bad vertices, and $\A=\{a_i...t_i\mid 1\leq i\leq\ell\}$ is the set of bad chains that ALG.3 attempts to guess in the previous subsection.

ALG.3 first guesses the path set $\P=\{P_1,...,P_\ell\}$, where  $P_i=s_ia_i...b_it_i$. Then, it constructs an $\ell$-TSPP instance $G'$ with $V_g$ being the vertex set and $\ell$ vertex pairs $(t_1,s_2),(t_2,s_3),...,(t_\ell,s_1)$, and calls the $(\alpha+\varepsilon)$-approximation algorithm~\cite{DBLP:journals/siamcomp/TraubVZ22} to compute $\ell$ paths $\P'$. By the definition of $\ell$-TSPP, patching all paths in $\P\cup\P'$ yields a feasible solution to TSP.

Clearly, if the path set $\P$ is guessed correctly, ALG.3 would compute a solution to TSP with weight at most $w(\P)+w(\P')\leq w(\P)+(\alpha+\varepsilon)\cdot(w(T^*)-w(\P))\leq (\alpha+\varepsilon)\cdot\OPT$.
It can be verified that the path set $\P$ can be obtained by enumerating $2^{\O(p\log p)}\cdot n^{2p}$ possible structures. 
Moreover, since $\ell\leq p$, the $(\alpha+\varepsilon)$-approximation algorithm~\cite{DBLP:journals/siamcomp/TraubVZ22} takes $n^{\O(p/\varepsilon)}$ time. 

Therefore, we obtain the following result.

\begin{theorem}\label{t3+}
For TSP parameterized by $p$, there exists an $(\alpha+\varepsilon)$-approximation algorithm with running time $n^{\O(p/\varepsilon)}$.
\end{theorem}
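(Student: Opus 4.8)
The plan is to reduce TSP parameterized by $p$ to metric $\ell$-TSPP with $\ell\le p$ and then apply the $(\alpha+\varepsilon)$-approximation algorithm of Traub \el~\cite{DBLP:journals/siamcomp/TraubVZ22} for $\Phi$-TSP with bounded interface size as a black box. First I would fix an optimal tour $T^*$ and examine how it interacts with $V_b$: read cyclically, the bad vertices of $T^*$ occur in maximal runs (the bad chains), and because each run is maximal, the vertices immediately before and after it are good; calling these $s_i$ and $t_i$, the portion of $T^*$ running from $s_i$ through the $i$-th run to $t_i$ is a path $P_i=s_ia_i\cdots b_it_i$. Set $\P=\{P_1,\dots,P_\ell\}$; since the runs are vertex-disjoint and nonempty, $\ell\le|V_b|=p$. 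Removing the edges and interior vertices of all $P_i$ from $T^*$ leaves $\ell$ vertex-disjoint paths inside $G[V_g]$, where (after indexing the chains in cyclic order) the $i$-th joins $t_i$ to $s_{i+1}$, with indices mod $\ell$; these paths together span $V_g$, so they constitute a feasible solution of weight $\OPT-w(\P)$ to the $\ell$-TSPP instance on vertex set $V_g$ with terminal pairs $(t_1,s_2),(t_2,s_3),\dots,(t_\ell,s_1)$. This instance is metric by Property~\ref{LB0}, since every triangle inside $G[V_g]$ contains a good vertex.

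The algorithm then enumerates all candidates for $\P$: guessing the abstract cyclic pattern of the bad chains on the $p$ bad vertices costs $2^{\O(p\log p)}$, and guessing the $2\ell\le 2p$ good endpoints costs at most $n^{2p}$, so there are $2^{\O(p\log p)}\cdot n^{2p}$ candidates, one of which is the true $\P$. For each candidate it builds the $\ell$-TSPP instance above and calls the algorithm of~\cite{DBLP:journals/siamcomp/TraubVZ22}; since metric $\ell$-TSPP is a special case of $\Phi$-TSP of interface size at most $2\ell\le 2p$, this runs in $n^{\O(p/\varepsilon)}$ time and returns a set of paths $\P'$ of weight at most $\alpha+\varepsilon$ times the optimal $\ell$-TSPP value. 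Concatenating the $P_i$ with the matching $\P'$-paths along the shared endpoints $s_i,t_i$ yields a closed walk that covers every vertex of $V$ ($V_b$ by the $P_i$, $V_g$ by the $\P'$-paths); shortcutting its repeated vertices, each of which can be bypassed through a good neighbor and hence without increasing the weight by Property~\ref{LB0}, produces a TSP tour $T$ in $G$. For the correct guess of $\P$ we obtain $w(T)\le w(\P)+w(\P')\le w(\P)+(\alpha+\varepsilon)(\OPT-w(\P))=(\alpha+\varepsilon)\OPT-(\alpha+\varepsilon-1)\,w(\P)\le(\alpha+\varepsilon)\OPT$, using $\alpha+\varepsilon\ge1$ and $w(\P)\ge0$, and the algorithm returns the best tour found. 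The total running time is $2^{\O(p\log p)}\cdot n^{2p}\cdot n^{\O(p/\varepsilon)}=n^{\O(p/\varepsilon)}$.

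The step I expect to require the most care is the patching-and-shortcutting argument producing $T$: one must argue that the union of the $\ell$ guessed chain-paths and the $\ell$ returned $\P'$-paths is connected and spans $V$, and that every vertex that appears more than once in the resulting closed walk is adjacent there to at least one good vertex, so that the shortcut taken through that good vertex is weight non-increasing by Property~\ref{LB0}; this is where the fact that the $\P'$-paths live entirely in $G[V_g]$ is essential. A secondary, more clerical point is verifying that metric $\ell$-TSPP really is the special case of $\Phi$-TSP from~\cite{DBLP:journals/siamcomp/TraubVZ22} with interface size at most $2\ell$ (the bound being exactly $2\ell$ unless some of the terminals coincide, in which case it is smaller). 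The remaining pieces — metricity of $G[V_g]$, the counting of candidates, and the arithmetic of the ratio — are routine.
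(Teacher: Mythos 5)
Your proposal is correct and follows essentially the same route as the paper's ALG.3: guess the bad-chain paths $\P$ with their good endpoints ($2^{\O(p\log p)}\cdot n^{2p}$ candidates), solve the resulting metric $\ell$-TSPP instance on $V_g$ with pairs $(t_i,s_{i+1})$ via the $\Phi$-TSP algorithm of Traub et al.\ (interface size at most $2\ell\leq 2p$), patch and shortcut, and bound the weight by $w(\P)+(\alpha+\varepsilon)(\OPT-w(\P))\leq(\alpha+\varepsilon)\OPT$. In fact, your explicit treatment of the patching/shortcutting step (all repeated vertices are good, so shortcuts are weight non-increasing by Property~\ref{LB0}) is more detailed than the paper's one-line "patching" claim.
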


\section{TSP Parameterized by $q$}\label{SC4}
In this section, we consider the parameter $q$.
It is NP-hard to compute $q$. 
In our algorithm, we first compute a minimum violating set (VS) $V_b$ in $\O(3^qn^3)$ time by using the algorithm in ~\cite{ZhouLG22}. 
 Let $V_g=V\setminus V_b$.
The vertices in $V_b$ are called \emph{bad}, and the vertices in $V_g$ are called \emph{good}. Now $\size{V_b} = q$.
We assume that $\min\{\size{V_g}, \size{V_b}\}\geq 1$. Next, we first define some notations.

In $T^*$, the edges between bad vertices form the set of \emph{bad chains} $\A$; the edges between a bad and a good vertex form the set $\B$, where each such edge is a \emph{limb}; and the edges between good vertices form the set of \emph{good chains} $\R$. 
Note that 
\begin{equation}\label{eq3.0}
\OPT=w(\A)+w(\B)+w(\R)\ \ \mbox{and}\ \ \size{\A}=\frac{1}{2}\size{\B}=\size{\R}.
\end{equation}
Let $V^i_g$ denote the set of vertices in $V_g$ that are incident to $i$ bad vertices in $T^*$, where $i\in\{0,1,2\}$. The vertices in $V^0_g$ (resp., $V^1_g\cup V^2_g$) are called \emph{internal vertices} (resp., \emph{anchors}). An anchor is a \emph{single anchor} if it is in $V^2_g$, and a \emph{pair anchor} otherwise. Let $V_a=V^1_g\cup V^2_g$ be the set of all anchors. Then, we have
\begin{equation}\label{eq3.1}
\size{\A}=\frac{1}{2}\size{\B}=\size{\R}\leq q\quad\mbox{and}\quad\size{V_a}\leq 2q.
\end{equation}
An illustration of these notations can be found in Figure~\ref{notations}. 
Note that part of the terminology, e.g., \emph{anchor} and \emph{limb}, is adopted from \cite{DBLP:conf/stoc/BlauthN23}.

\begin{figure}
    \centering
    \includegraphics[scale=0.85]{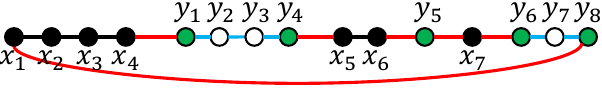}
    \caption{An illustration of our notations: the cycle denotes $T^*$; the black, white, and green nodes denote bad vertices, internal vertices, and anchors, respectively. We have $\A=\{x_1...x_4,x_5x_6,x_7\}$, $\B=\{x_4y_1,...,y_8x_1\}$, and $\R=\{y_1...y_4,y_5,y_6y_7y_8\}$. Note that $x_7\in\A$ is a 1-path.}
\label{notations}
\end{figure}

First, by directly extending ALG.3 to TSP parameterized by $q$, we have the following result.

\begin{theorem}\label{t3++}
For TSP parameterized by $q$, there exists an $(\alpha+\varepsilon)$-approximation algorithm with running time $n^{\O(q/\varepsilon)}$.
\end{theorem}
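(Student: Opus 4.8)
The plan is to mirror the reduction used for ALG.3 (Theorem~\ref{t3+}) almost verbatim, replacing the role of $p$ by $q$ and the set $V_b$ of bad vertices (those in violating triangles) by the minimum violating set of size $q$. The key structural fact is exactly Property~\ref{LB0} restated for the $q$-parameterization: since $G[V_g]$ is metric, any triangle all three of whose vertices lie in $V_g$ satisfies the triangle inequality, so in particular a good vertex incident to two bad vertices on $T^*$ may not be shortcuttable, but an \emph{internal} good vertex (in $V^0_g$) always can, and more importantly the good chains $\R$ live inside the metric graph $G[V_g]$. This is what lets us treat the good part with a metric $k$-TSPP algorithm.

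First I would set up the guessing step. By (\ref{eq3.1}), $|\A|=|\R|\le q$ and the number of anchors $|V_a|\le 2q$. ALG.3 (in its $q$-version) guesses: (i) the cyclic pattern in which the bad chains $\A$ and their adjacent anchors appear around $T^*$ — there are $2^{\O(q\log q)}$ such patterns, as in the proof of Theorem~\ref{t2}; and (ii) for each bad chain, which $\le q$ bad vertices it contains and in what order, together with the (at most two) anchors flanking it — this costs a further $n^{\O(q)}$ factor for choosing the anchors out of $V_g$, since each anchor is one specific good vertex. So altogether the guess of the ``bad part'' — the collection of paths $\P=\{P_1,\dots,P_\ell\}$ where each $P_i$ runs from an anchor $s_i$ through a bad chain to an anchor $t_i$, with $\ell=|\A|\le q$ — can be enumerated in $2^{\O(q\log q)}\cdot n^{\O(q)}=n^{\O(q)}$ time. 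When $\P$ is guessed correctly, $\P$ coincides with the corresponding portion of $T^*$, so $w(\P)$ equals the weight of $\A$ plus the two flanking limbs per chain.

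Next I would build the metric $\ell$-TSPP instance. Removing the open arcs $P_i$ from $T^*$ leaves $\ell$ good chains $\R$; stitching, the tour restricted to $V_g$ together with the endpoints $\{s_i,t_i\}$ must be realized by $\ell$ vertex-disjoint paths, one joining $t_i$ to $s_{i+1}$ (cyclically), covering every vertex of $V_g$. This is precisely a metric $\ell$-TSPP instance on $G[V_g]$ with terminal pairs $(t_1,s_2),(t_2,s_3),\dots,(t_\ell,s_1)$, and it is metric because $G[V_g]$ is metric (here the internal good vertices $V^0_g$ are exactly the non-terminal vertices that must be covered). Running the Traub–Zaslavsky–Vygen $(\alpha+\varepsilon)$-approximation for $\Phi$-TSP of interface size $\le 2\ell\le 2q$ from \cite{DBLP:journals/siamcomp/TraubVZ22} gives $\ell$ paths $\P'$ in time $n^{\O(q/\varepsilon)}$. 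Since the good-chain portion of $T^*$ is a feasible solution to this $\ell$-TSPP instance of weight $w(\R)=\OPT-w(\P)$, we get $w(\P')\le(\alpha+\varepsilon)\bigl(\OPT-w(\P)\bigr)$. Patching $\P$ and $\P'$ along the shared anchors yields a closed walk covering $V$; shortcutting repeated vertices (legal whenever the shortcut goes through a good vertex, by the metricity of $G[V_g]$, and trivially at repeated good vertices) produces a genuine TSP tour $T_3$ with $w(T_3)\le w(\P)+w(\P')\le w(\P)+(\alpha+\varepsilon)(\OPT-w(\P))\le(\alpha+\varepsilon)\OPT$, using $\alpha+\varepsilon\ge 1$.

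The main obstacle — really the only nontrivial point beyond bookkeeping — is arguing that the patched object $\P\cup\P'$ can be \emph{shortcut} to a Hamiltonian cycle without increasing weight, i.e.\ that every vertex repetition created by the patching sits at (or can be routed through) a good vertex so that Property~\ref{LB0} applies. One has to check that the anchors $s_i,t_i$, where $\P$ and $\P'$ meet, are good by construction, so the degree-4 junctions there are resolved by a metric shortcut inside $G[V_g]$; and that no bad vertex appears twice, which holds because the $P_i$ are vertex-disjoint and $\P'$ lives entirely in $V_g$. This is the same shortcutting argument as in Lemma~\ref{LB2.3} but simpler, since $\P'$ contains no bad vertices at all. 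The running time is dominated by $n^{\O(q)}$ guesses times $n^{\O(q/\varepsilon)}$ per call, i.e.\ $n^{\O(q/\varepsilon)}$ overall, which is the claimed bound, so I would simply remark that the argument of Theorem~\ref{t3+} carries over with $p$ replaced by $q$ and $V_b$ reinterpreted as a minimum violating set.
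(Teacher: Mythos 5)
Your proposal is correct and takes essentially the same route as the paper, which proves Theorem~\ref{t3++} only by the one-line remark that ALG.3 extends directly to the parameter $q$. You additionally pin down the one genuine subtlety of that extension---that Property~\ref{LB0} fails for $q$ but is not needed because all patching and shortcutting happens at good vertices inside the metric subgraph $G[V_g]$---which is exactly the right point to check.
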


Next, we give an FPT $3$-approximation algorithm (ALG.4) with running time $2^{\O(q\log q)}\cdot n^{\O(1)}$. 

\subsection{The Algorithm}\label{SC4.1}
The high-level idea of ALG.4 is as follows.
Since a triangle containing one bad vertex and two good vertices may violate the triangle inequality, Property~\ref{LB0} in the previous section does not hold. Thus, only guessing $\A$ may not be enough to obtain a constant FPT approximation ratio.
Since $\size{V_g}$ can also be arbitrarily larger than $q$, we cannot guess $\B$ directly.
To address these issues, we use an idea in \cite{ZhouLG22+} to guess a set of limbs $\B'$ in FPT time by finding a minimum-weight spanning $\size{\A}$-forest $\F$ in $G[V_g]$ and then guessing a set of anchors in a subset of $V(\F)$, where $\B'$ is not necessarily equals to $\B$ but it satisfies $w(\B')\leq w(\B)$. Clearly, 
\begin{equation}\label{eq3.2}
\size{\F}\leq q\quad\mbox{and}\quad w(\F)\leq w(\R). 
\end{equation}

Then, we augment $\A\cup\B'\cup\F$ into an Eulerian graph. The set of good chains $\R$ ensures the existence of a set of edges $\R'$ with $w(\R')\leq w(\R)$ such that $G_\A=\A\cup\B'\cup\R'\cup\F$ is a connected graph and $\size{Odd(G_\A)\cap V(F)}$ is even for each $F\in\F$. Thus, we can find a minimum-weight matching $\M_F$ in $G[Odd(G_\A)\cap V(F)]$ for each $F\in \F$ to fix the parity of the odd-degree vertices in $G_\A$. 
We will also prove $w(M_F)\leq w(F)$. 
Let $\M=\bigcup_{F\in\F} M_F$. Finally, we obtain an Eulerian graph $G'_\A=\A\cup\B'\cup\R'\cup\F\cup\M$, and then we take shortcuts on $G'_\A$ to obtain a TSP tour without increasing the weight. 
Thus, by (\ref{eq3.0}), the tour has weight at most 
\begin{align*}
w(G'_\A)&\leq w(\A)+w(\B')+w(\R')+w(\F)+w(\M)\leq w(\A)+w(\B)+w(\R)+2w(\R)\leq 3\cdot\OPT    
\end{align*} 

ALG.4 is described in Algorithm~\ref{alg3}, where {LIMB}, {CONNECT}, and {SHORTCUT} are sub-algorithms explained later. An illustration of our ALG.4 can be found in Figure~\ref{fig3}.

\begin{algorithm}[t]
\caption{ALG.4}
\label{alg3}
\textbf{Input:} An instance $G=(V=V_g\cup V_b, E, w)$. \\
\textbf{Output:} A feasible solution.
\begin{algorithmic}[1]
\State\label{3.1} Guess the set of bad chains $\A$ in $T^*$.

\State\label{3.2} Find a minimum-weight spanning $\size{\A}$-forest $\F$ in $G[V_g]$.

\State\label{3.3} Guess a set of limbs $\B'$ by calling {LIMB}.

\State\label{3.4} Guess a set of edges $\R'$ by calling {CONNECT}.


\State\label{3.5} Find a set of edges $\M=\bigcup_{F\in\F} M_F$, where $\M_F$ is a minimum-weight matching in $G[Odd(G_\A)\cap V(F)]$ and $G_\A=\A\cup\B'\cup\R'\cup\F$ for each $F\in\F$.

\State\label{3.6} Obtain a TSP tour $T_3$ in $G$ using {SHORTCUT}.
\end{algorithmic}
\end{algorithm}

\begin{figure}[t]
    \centering
    \resizebox{0.88\linewidth}{!}{
    \begin{subfigure}[b]{0.4\linewidth}
        \centering
        \includegraphics[width=0.8\linewidth]{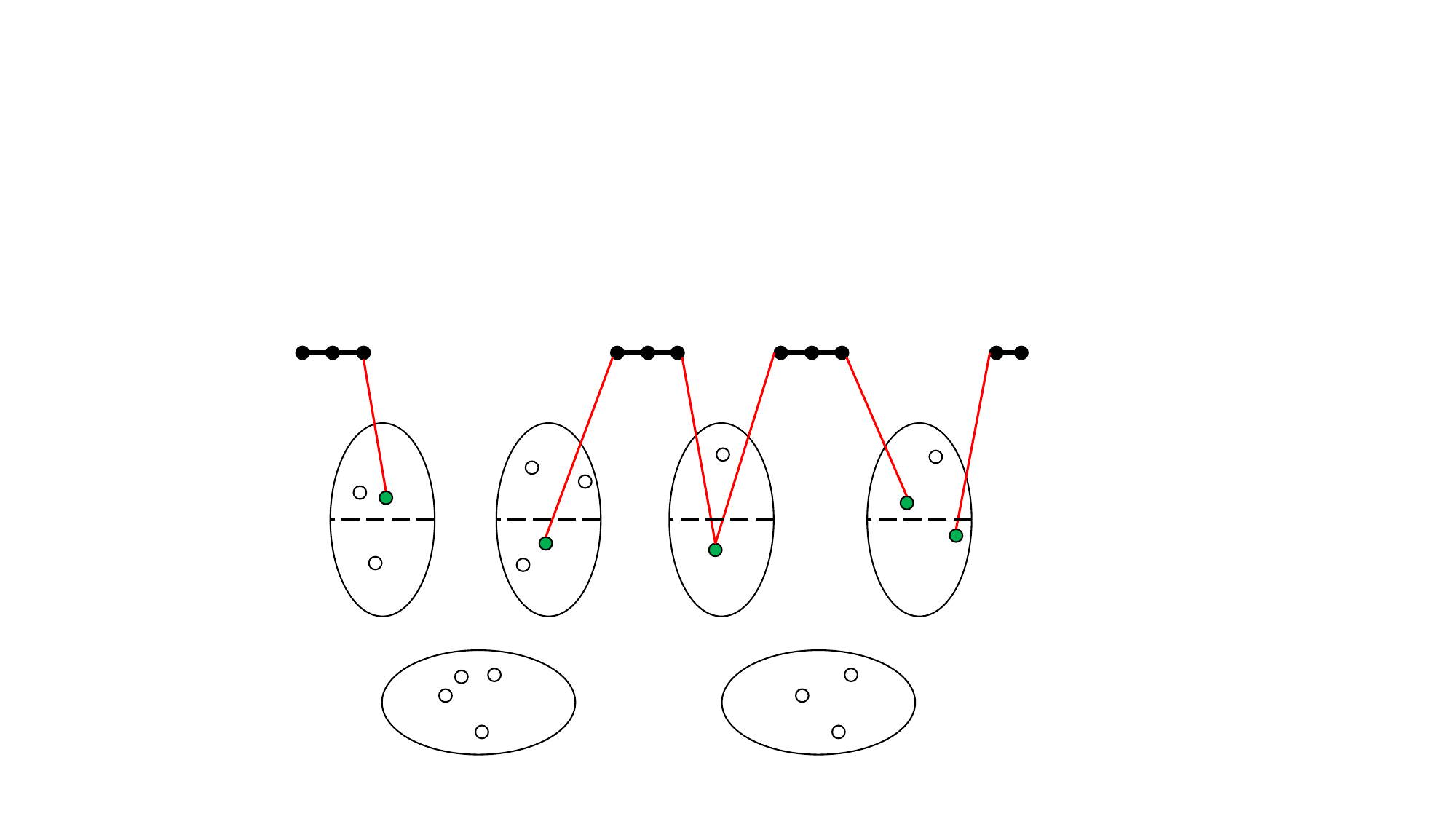}
        \caption{The $\A$ and $\B$.}
    \end{subfigure}
    \hfill
    \begin{subfigure}[b]{0.4\linewidth}
        \centering
        \includegraphics[width=0.8\linewidth]{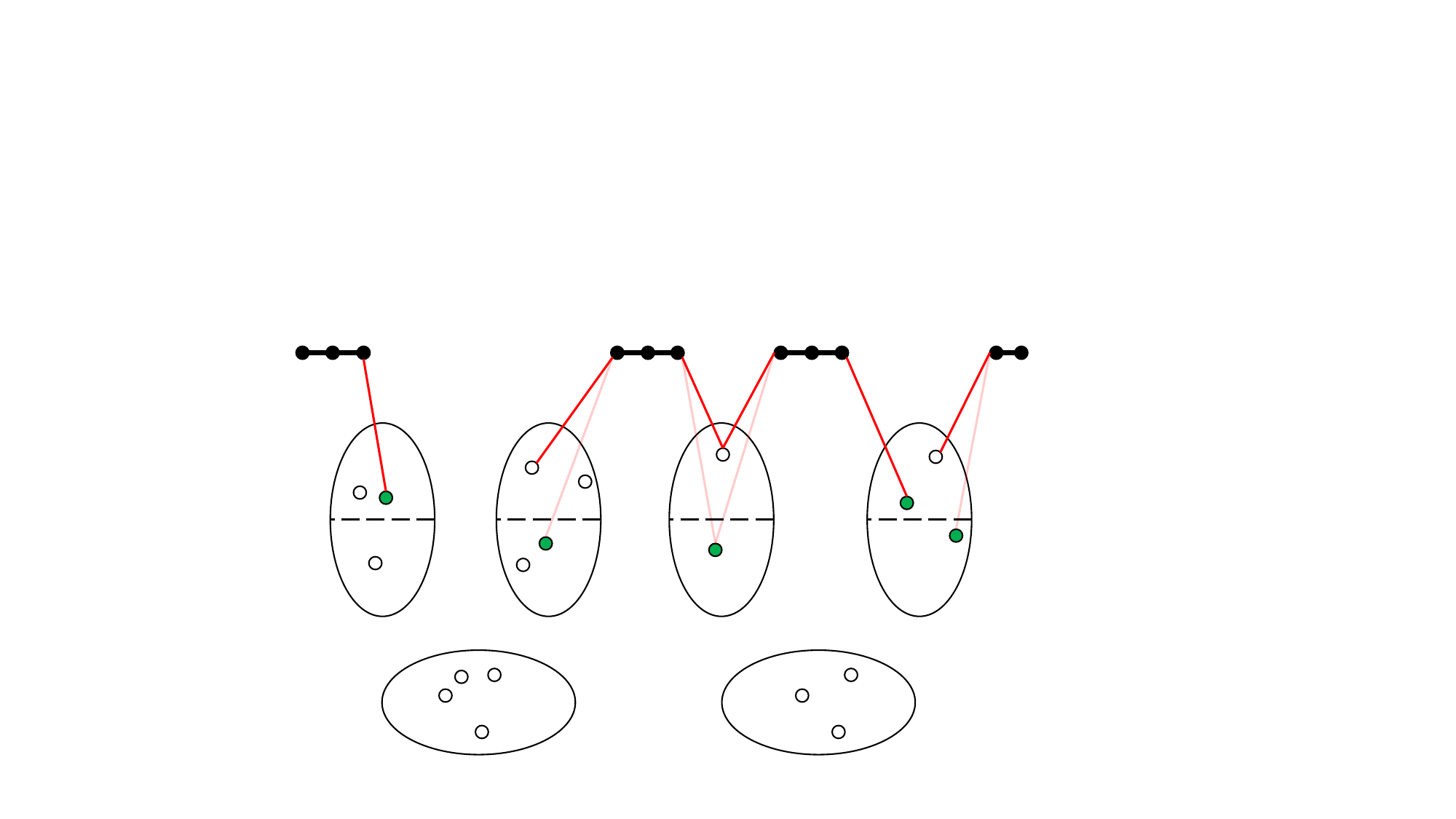}
        \caption{The guessed limbs in $\B'$.}
    \end{subfigure}
    }

    \vspace{1mm}
    
    \resizebox{0.88\linewidth}{!}{
    \begin{subfigure}[b]{0.4\linewidth}
        \centering
        \includegraphics[width=0.8\linewidth]{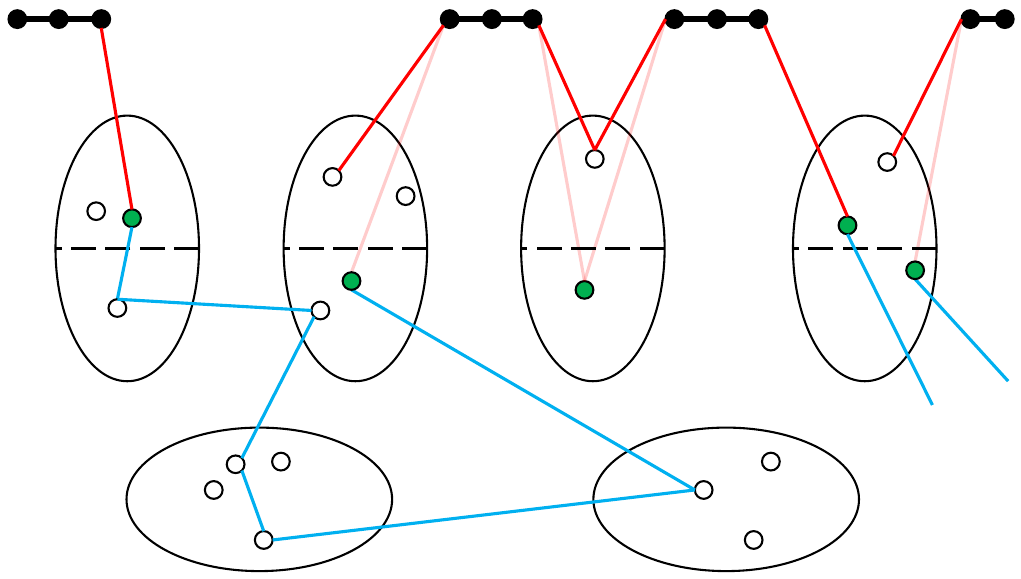}
        \caption{The chains in $\R$.}
    \end{subfigure}
    \hfill
    \begin{subfigure}[b]{0.4\linewidth}
        \centering
        \includegraphics[width=0.8\linewidth]{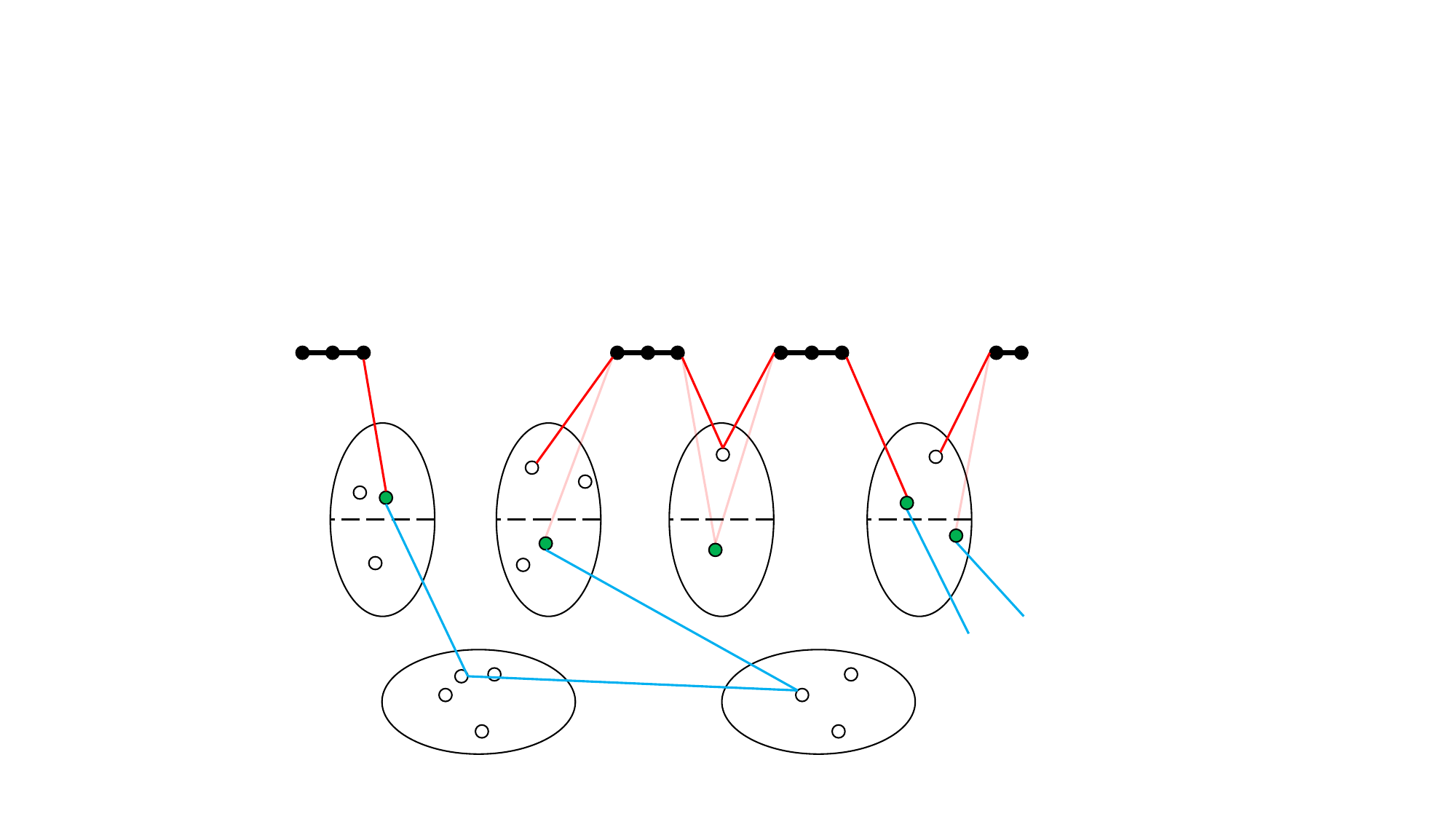}
        \caption{The chains in $\R^*$.}
    \end{subfigure}
    }

    \vspace{1mm}
    
    \resizebox{0.88\linewidth}{!}{
    \begin{subfigure}[b]{0.4\linewidth}
        \centering
        \includegraphics[width=0.8\linewidth]{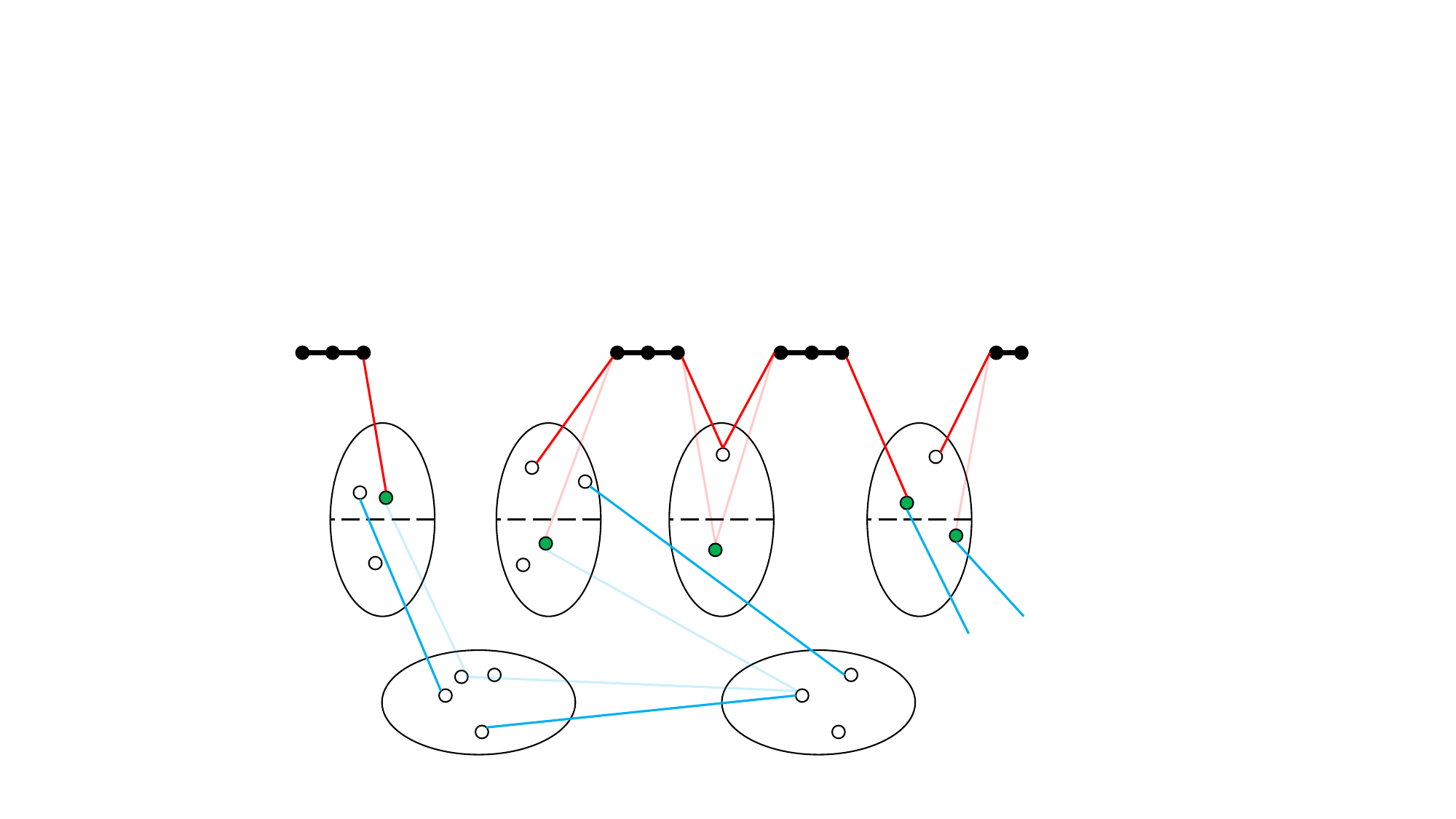}
        \caption{The guessed edges in $\R'$.}
    \end{subfigure}
    \hfill
    \begin{subfigure}[b]{0.4\linewidth}
        \centering
        \includegraphics[width=0.8\linewidth]{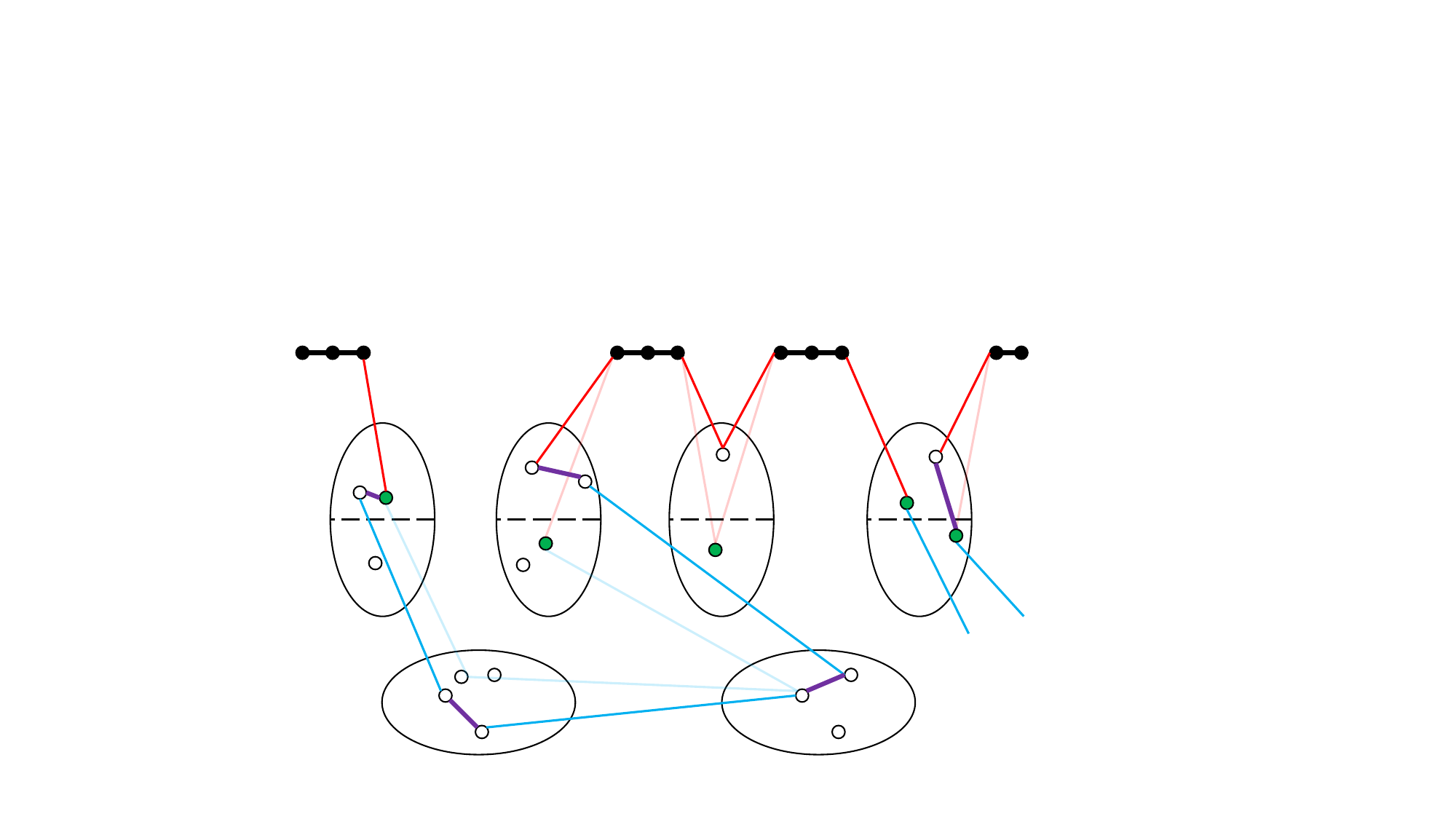}
        \caption{The edges in $\M$.}
    \end{subfigure}
    }
    \caption{An illustration of our ALG.4. Black (resp., white) nodes denote bad (resp., internal) vertices; green nodes are anchors. Each ellipse denotes a tree $F\in\F$, with the upper part indicating its potential set $V_F$.
    In (a), the edges in $E(\A)$ (resp., limbs in $\B$) are shown in black (resp., red); 
    In (b), the guessed limbs in $\B'$ are shown in red, and the white and green nodes incident to red edges denote guessed anchors; 
    In (c), the edges in $E(\R)$ are shown in blue; 
    In (d), the edges in $E(\R^*)$ are shown in blue;
    In (e), the guessed edges in $\R'$ are shown in blue;
    In (f), the edges in $\M$ are shown in purple.
    }
    \label{fig3}
\end{figure}

We remark that, instead of guessing a set of limbs $\B'$, the previous FPT $11$-approximation algorithm~\cite{ZhouLG22+} mainly uses the spanning $\size{\A}$-forest $\F$ to guess a set of edges $E'$ such that $\A\cup E'$ forms a TSP tour in $G[V_b\cup V(E')]$. Then, it constructs a special spanning forest $\F'$ and obtains a TSP tour by taking shortcuts on $\A\cup E'\cup2\F'$, similar to the FPT $3$-approximation algorithm for TSP parameterized by $p$, as mentioned earlier.

Next, we show the sub-algorithms {LIMB} and {CONNECT} in Steps~\ref{3.3} and~\ref{3.4}, the property of $\M$ in Step~\ref{3.5}, and the sub-algorithm {SHORTCUT} in Step~\ref{3.6}, respectively.

\subsubsection{The sub-algorithm: LIMB}\label{SC4.1.1}
Given $\A$ and $\F$, we assume that $T^*=\overline{a_1...b_1...a_{\size{\A}}...b_{\size{\A}}...}$, where $a_i...b_i\in\A$ for each $i$. 
Denote the number of anchors between $b_i$ and $a_{i+1}$ in $T^*$ by $f_i$, where $f_i\in\{1,2\}$. 
For each tree $F\in\F$, initialize a \emph{potential} (vertex) set $V_F\coloneqq\emptyset$.
LIMB works as follows.

First, we guess the values of $f_1,...,f_{\size{\A}}$, which specify the number of anchors between each pair $(b_i, a_{i+1})$ in $T^*$.
This allows us to determine the total number and relative positions of anchors in the ordering $\overline{a_1...b_1...a_{\size{\A}}...b_{\size{\A}}...}$.
For each such anchor, we can determine whether it is a single or pair anchor and how many bad vertices it connects to in $T^*$.
However, we do not know the exact identity of each anchor, i.e., which specific good vertex it corresponds to.

For example, consider the $T^*$ in Figure~\ref{fig1}. If we correctly guess $f_1=2$, $f_2=1$, and $f_3=2$, then we know that the anchors, in order, are pair, pair, single, pair, and pair. The corresponding sets of bad vertices incident to them are  $\{x_4\}$, $\{x_5\}$, $\{x_6,x_7\}$, $\{x_7\}$, $\{x_1\}$, respectively.

Next, for each anchor $x$, we guess the tree $F_x\in \F$ with $x\in V(F)$, select a set of potential vertices $V_x\subseteq V(F_x)$, and update the potential set by setting $V_{F_x}\coloneqq V_{F_x}\cup V_x$.
We will show that the size of each final potential set is $\O(q^2)$, and then, by guessing anchors in these potential vertex sets, we can obtain a set of limbs $\B'$ with desirable properties.

The details are put in Algorithm~\ref{3.1}.

\begin{algorithm}[t]
\caption{LIMB}
\label{alg3.1}
\textbf{Input:} An instance $G=(V=V_g\cup V_b, E, w)$, $\A$, and $\F$. \\
\textbf{Output:} A set of limbs $\B'$.
\begin{algorithmic}[1]
\State\label{3.1.3} Initialize $V_F\coloneqq\emptyset$ for each $F\in\F$, and $\B'=\emptyset$;

\State\label{3.1.1} Guess the values of $f_1,...,f_{\size{\A}}$.

\For{each anchor $x\in V_a$ in order $\overline{a_1...b_1...a_{\size{\A}}...b_{\size{\A}}...}$}\label{3.1.4}
\State\label{3.1.2} Guess the unique tree $F_x\in \F$ with $x\in V(F_x)$, and let $n_x=\min\{2q,\size{V(F_x)}\}$. \Comment{{We do not know the identity of $x$.}}
\If{$x$ is a single anchor}\label{3.1.5} \Comment{{This can be determined by the $f_i$ w.r.t.\ $x$.}}
\State\label{3.1.6} Let $y$ and $z$ be the bad vertices incident to $x$ in $T^*$.
\State\label{3.1.7} Obtain a set of $n_x$ vertices $V_x$ by selecting $x'\in V(F_{x})$ in the increasing order of $w(x',y)+w(x',z)$.
\Else\label{3.1.8}
\State\label{3.1.9} Let $y$ be the unique bad vertex incident to $x$ in $T^*$.
\State\label{3.1.10} Obtain a set of $n_x$ vertices $V_x$ by selecting $x'\in V(F_{x})$ in the increasing order of $w(x',y)$.
\EndIf\label{3.1.11}
\State\label{3.1.12} Update $V_{F_x}\coloneqq V_{F_x}\cup V_x$.
\EndFor\label{3.1.13}
\For{each anchor $x\in V_a$ in order $\overline{a_1...b_1...a_{\size{\A}}...b_{\size{\A}}...}$}\label{3.1.14}
\State\label{3.1.15} Guess $x$ by enumerating a vertex $x'\in V_{F_x}$.
\If{$x$ is a single anchor}\label{3.1.16}
\State\label{3.1.17} Let $y$ and $z$ be the bad vertices incident to $x$ in $T^*$.
\State\label{3.1.18} Update $\B'\coloneqq \B'\cup\{x'y,x'z\}$. \Comment{$x'y$ and $x'z$ are the limbs w.r.t.\ $x'$.}
\Else\label{3.1.19}
\State\label{3.1.20} Let $y$ be the unique bad vertices incident to $x$ in $T^*$.
\State\label{3.1.21} Update $\B'\coloneqq \B'\cup\{x'y\}$. \Comment{$x'y$ is the limb w.r.t.\ $x'$.}
\EndIf\label{3.1.22}
\EndFor\label{3.1.23}
\end{algorithmic}
\end{algorithm}

\begin{lemma}\label{thelimb}
There exists a configuration for the guessed anchors such that
\begin{enumerate}
    \item the guessed anchors are pair-wise distinct;
    \item for each anchor $x\in V_a$, the guessed anchor $x'$ for $x$ satisfies $x'=x$ if $x\in V_{F_x}$, and $x'\in V_x\subseteq V_{F_x}$ otherwise; 
    \item the set of all limbs $\B'$ w.r.t.\ the guessed anchors satisfies $w(\B')\leq w(\B)$.
\end{enumerate}
Moreover, {LIMB} can compute such a configuration in $\O(2^q\cdot q^{2q}\cdot (4q^2)^{2q})$ guesses, where each guess takes $n^{\O(1)}$ time.
\end{lemma}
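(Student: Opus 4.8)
The plan is to analyze the sub-algorithm {LIMB} step by step, first establishing the bound on the size of each potential set $V_F$, then verifying the three properties of the claimed configuration, and finally counting guesses.

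First I would bound the running time / number of guesses. Step~\ref{3.1.1} guesses $f_1,\dots,f_{\size{\A}}\in\{1,2\}$, contributing a factor of $2^{\size{\A}}\leq 2^q$. In the first loop (Steps~\ref{3.1.4}--\ref{3.1.13}), for each of the $\size{V_a}\leq 2q$ anchors we guess its tree $F_x\in\F$ (a factor of $\size{\F}\leq q$ per anchor, so $q^{2q}$ in total); the construction of $V_x$ in Steps~\ref{3.1.7} and~\ref{3.1.10} is deterministic given the guesses, so it adds no branching. Since each anchor contributes at most $n_x\leq 2q$ vertices to some potential set, and at most $2q$ anchors map into any fixed $F$, each final potential set satisfies $\size{V_F}\leq 2q\cdot 2q=4q^2$. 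Then the second loop (Steps~\ref{3.1.14}--\ref{3.1.23}) guesses, for each of the $\leq 2q$ anchors, a vertex $x'\in V_{F_x}$, contributing $(4q^2)^{2q}$. Multiplying gives $\O(2^q\cdot q^{2q}\cdot(4q^2)^{2q})$ guesses, and everything else (building $V_x$, updating $\B'$) is clearly polynomial per guess.

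Next I would exhibit the ``correct'' configuration and check the three properties. Fix the branch where $f_1,\dots,f_{\size{\A}}$ are guessed to be the true values from $T^*$, and where, for each anchor $x$, the tree $F_x$ is guessed to be the (unique) tree of $\F$ actually containing $x$. For the second loop: if the true anchor $x$ already lies in $V_{F_x}$, guess $x'=x$; otherwise guess $x'$ to be any vertex of $V_x$ (this is where I must argue $V_x$ is nonempty, which holds since $n_x\geq 1$ as $\size{V(F_x)}\geq 1$, and more to the point I want a vertex of $V_x$ with small limb weight). Property~2 is then immediate by construction. For Property~1 (distinctness): I need to argue that distinct true anchors $x\neq \hat x$ receive distinct guesses. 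This is the delicate point — if both $x,\hat x$ lie in the same tree $F$ and the guesses fall into the same $V_F$, I must ensure the enumeration in Step~\ref{3.1.15} can pick distinct representatives. The key observation is that $V_x$ (resp.\ $V_{\hat x}$) consists of the $n_x$ cheapest candidates by limb weight, and since $n_x=\min\{2q,\size{V(F_x)}\}\geq$ (number of anchors mapped to $F$), there is enough room: one can greedily assign to each anchor a distinct vertex from its own $V_x$, because each $V_x$ has size at least the total number of anchors competing for tree $F$. Property~3 (weight): for each true anchor $x$ whose guessed $x'$ differs from $x$, the limb weight from $x'$ to its incident bad vertex (or vertices) is at most that from $x$, since $x'$ was selected among the cheapest by exactly that cost (Steps~\ref{3.1.7},~\ref{3.1.10}); and the greedy distinctness assignment can be done while still respecting cost order, so summing over all anchors gives $w(\B')\leq w(\B)$.

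The main obstacle is reconciling Properties~1 and~3 simultaneously: I cannot independently pick the cheapest representative for each anchor (they might collide), and I cannot pick arbitrary distinct representatives (weight might blow up). The right tool is a Hall-type / exchange argument: order the anchors mapped to a fixed tree $F$ as $x_1,\dots,x_m$ (with $m\leq n_x$ for the relevant $n_x=\min\{2q,\size{V(F)}\}$, noting $m\leq\size{V_a}\leq 2q$ and $m\leq\size{V(F)}$ since the true anchors in $F$ are distinct vertices of $V(F)$), and observe that if the true $x_j\in V_{F}$ we keep $x'_j=x_j$, while for the others we must choose distinct vertices from their respective $V_{x_j}$'s avoiding the already-fixed ones. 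Since $\size{V_{x_j}}=n_{x_j}\geq m$ and at most $m-1$ vertices are blocked, a valid distinct choice always exists; and because for a pair anchor the ordering in Step~\ref{3.1.10} is by $w(\cdot,y)$ and for a single anchor by $w(\cdot,y)+w(\cdot,z)$ — exactly the quantities summing to the limb contribution — any vertex of $V_{x_j}$ has limb cost at most the limb cost of the true $x_j$ (which itself lies in $V(F)$, and if it is not among the $n_{x_j}$ cheapest then those $n_{x_j}\geq m$ cheapest are all at least as good, more than enough to supply a distinct unblocked choice). I would spell out this counting carefully; once it is in place, combining over all anchors and all trees yields the three properties, and together with the guess count established above, the lemma follows.
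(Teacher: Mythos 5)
Your proposal is correct and follows essentially the same route as the paper: fix the branch where the $f_i$'s and the trees $F_x$ are guessed correctly, use the fact that each $V_x$ consists of the $n_x=\min\{2q,\size{V(F_x)}\}$ cheapest candidates (so any replacement for an anchor $x\notin V_x$ costs no more), apply a pigeonhole argument to find pairwise-distinct representatives, and multiply the guess counts $2^q\cdot q^{2q}\cdot(4q^2)^{2q}$. The only (immaterial) difference is bookkeeping: you run the collision-avoidance count per tree via $m\leq n_x$, whereas the paper notes that a replacement is only needed when $\size{V_x}=2q\geq\size{V_a}$ and invokes the same pigeonhole globally.
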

\begin{proof}
First, assume that both Steps~\ref{3.1.1} and \ref{3.1.2} guess correctly.
Next, we create a configuration $\chi$ for the guessed anchors in Step~\ref{3.1.15} satisfying the three conditions in the lemma.

Let $V_a=V^0_a\cup V^1_a$, where $V^0_a=\{x\mid x\in V_a, x\notin V_x\}$, and $V^1_a=\{x\mid x\in V_a, x\in V_x\}$.
For each anchor $x$, let $x'$ denote the guessed anchor w.r.t.\ $x$. 
Since $x'$ must be guessed from the potential set $V_{F_x}$, we show how to set $x'$ in $\chi$.

We first consider the anchors in $V^1_a$.

\textbf{Case~1: $x\in V^1_a$.} In this case, we have $x\in V_x$ by definition. Since $V_x\subseteq V_{F_x}$ by Step~\ref{3.1.12}, we have $x\in V_{F_x}$. Thus, we simply set $x'=x$ in $\chi$, i.e., the anchor $x$ is guessed correctly in $\chi$.
Clearly, the weight of limbs w.r.t.\ $x'$ equals the weight of limbs w.r.t.\ $x$.

Then, we consider the anchors in $V^0_a$.

\textbf{Case~2: $x\in V^0_a$.} In this case, we have $x\notin V_x$ by definition. Since $V_x\subseteq V(F_x)$ by Step~\ref{3.1.12}, we have $V(F_x)\setminus V_x\neq\emptyset$, and then $\size{V(F_x)}>\size{V_x}$. By Steps~\ref{3.1.2}, \ref{3.1.7}, and \ref{3.1.10} we know that $\size{V_x}=2q$. Therefore, we have
\[
\size{V(F_x)}>2q.
\] 
Since $\size{V_a}\leq 2q$ by (\ref{eq3.1}), at least one vertex $x''\in V_x$ is still unused for guessing anchors, and then we set $x'=x''$ in $\chi$. 
By Steps~\ref{3.1.7} and \ref{3.1.10}, the weight of limbs w.r.t.\ $x'$ is at most the weight of limbs w.r.t.\ $x$.

Now, we have set all anchors in $V_a$ in $\chi$. Thus, $\chi$ is a valid configuration for the guessed anchors.

Under the above setting, the configuration $\chi$ clearly satisfies the first two conditions in this lemma. 
Since $\B'$ (resp., $\B$) is the set of all limbs w.r.t.\ all set anchors (resp., all real anchors in $V_a$), we have $w(\B')\leq w(\B)$ by definition.

Next, we analyze the total number of guesses (iterations) in {LIMB}. We begin with the following definition.

\begin{definition}
In an algorithm, we define the outermost loop as the \emph{level-1} loop, and let $n_i$ denote the number of iterations of the level-$i$ loop. In particular, each iteration of a level-$i$ loop performs $n_{i+1}$ iterations of the level-$(i+1)$ loop.
\end{definition}

Hence, LIMB performs $n_1\cdot n_2\cdots n_k$ iterations in total, where $k$ is the maximum level of loops.

We classify the loops in {LIMB} as follows.

\textbf{Level-1 Loop:} {LIMB} guesses the values of $f_1,...,f_{\size{\A}}$. Since $f_i\in\{1,2\}$ for each $i\leq \size{\A}$, and $\size{\A}\leq q$ by (\ref{eq3.1}), this incurs $\O(2^q)$ iterations.

\textbf{Level-2 Loop:} {LIMB} guesses the unique tree $F_x\in \F$ with $x\in V(F_x)$ for each anchor $x\in V_a$. Since $\size{V_a}\leq 2q$ by (\ref{eq3.1}) and $\size{\F}\leq q$ by (\ref{eq3.2}), this incurs $\O(q^{2q})$ iterations.

\textbf{Level-3 Loop:} {LIMB} guesses $x$ by enumerating a vertex from $V_{F_x}$ for each anchor $x\in V_a$. Since $\size{V_a}\leq 2q$ by (\ref{eq3.1}) and $\size{V_x}\leq 2q$ by Steps~\ref{3.1.7} and \ref{3.1.10} for each anchor $x\in V_a$, for each $F\in\F$, by Step~\ref{3.1.12}, we have $\size{V_F}\leq 2q\cdot 2q = 4q^2$. Thus, this incurs $\O((4q^2)^{2q})$ iterations.

Overall, {LIMB} performs $\O(2^q\cdot q^{2q}\cdot (4q^2)^{2q})$ iterations in total. In each iteration, {LIMB} clearly runs in $n^{\O(1)}$ time.
\end{proof}

\subsubsection{The sub-algorithm: CONNECT}\label{SC4.1.2}
The graph $\A\cup\B'\cup\F$ may be disconnected. For example, any tree $F\in\F$ with $V(F)\cap V_a=\emptyset$ must form a component (see (b) in Figure~\ref{fig3}). 

We introduce the sub-algorithm {CONNECT}, which augments $\A\cup\B'\cup\F$ into a connected graph $\A\cup\B'\cup\R'\cup\F$ by guessing a set of edges $\R'$.

It is easy to verify that adding the full set of good chains $\R$ to $\A \cup \B' \cup \F$ yields a connected graph. However, since $\size{V(\R)} = \size{V_g} = n - q$, enumerating all of $\R$ is not feasible in FPT time.
Therefore, in CONNECT, we instead guess the subset of trees $\F_{R}\subseteq\F$ connected by each $R\in\R$, as $\size{\F}\leq q$ by (\ref{eq3.2}). 
Note that this approach leads to $2^{\O(q^2)}$ guesses. To improve efficiency, we refine the strategy to reduce the number of guesses to $2^{\O(q \log q)}$.

First, we construct a complete graph $\widetilde{G}=(\widetilde{V},\widetilde{E},\widetilde{w})$ by contracting each tree $F\in\F$ into a vertex $u_F$, and define the weight between $u_F$ and $u_{F'}$ as 
\begin{equation}\label{eq3.3}
\widetilde{w}(u_F,u_{F'})=\min_{v\in V(F),v'\in V(F')}w(v,v').
\end{equation}

For each chain $R\in\R$, we obtain a minimal chain $R_m$ connecting all trees in $\F_R$ by shortcutting its internal vertices.
Then, we obtain a new chain $R^{*}$ by shortcutting the internal vertices, ensuring that each vertex in $\widetilde{V}_a\coloneqq\{u_F\mid F\in\F, V(F)\cap V_a\subseteq V^2_g, V(F)\setminus V_a\neq\emptyset\}$
appears in only one path in $\R^{*}\coloneqq\{R^{*}\mid R\in\R, \size{V(R)}\geq 2\}$.
It can be verified that the graph $\A\cup\B'\cup\R^{*}\cup\F$ remains connected.

Let $\F_{R^*}$ denote the set of trees in $\F$ connected by $R^*$, and define $\widetilde{V}_{R^*}\coloneqq\{u_F\mid F\in\F_{R^*}\}$.
Then, the collection $\widetilde{\V}\coloneqq\{\widetilde{V}_{R^{*}}\cap \widetilde{V}_a\mid R^*\in\R^*\}$ forms a partition of $\widetilde{V}_a$, which can be guessed in $2^{\O(q\log q)}$ times, since $\widetilde{V}_a$ is determined from $\B'$ and satisfies $\size{\widetilde{V}_a}\leq \size{\F}\leq q$ by (\ref{eq3.2}).

Once $\widetilde{\V}$ is guessed, we obtain $\widetilde{V}_{R^{*}}\cap \widetilde{V}_a$ for each $R\in \R$ with $\size{V(R)}\geq 2$.
Moreover, since $\widetilde{V}_{R^{*}}\setminus \widetilde{V}_a$ corresponds to trees in $\F$ containing the anchors of $R^*$, it is also determined by $\B'$. Hence, we obtain $\widetilde{V}_{R^*}$ as $\widetilde{V}_{R^*}=(\widetilde{V}_{R^{*}}\cap \widetilde{V}_a)\cup(\widetilde{V}_{R^{*}}\setminus \widetilde{V}_a)$.

We then compute an optimal TSP path or tour in $\widetilde{G}[\widetilde{V}_{R^*}]$ using the DP method~\cite{bellman1962dynamic}, which yields a set of edges in $G$, denoted by $E_R$, connecting all trees in $\F_{R^*}$ (see (e) in Figure~\ref{fig3}). 
By construction, $w(E_R) \leq w(R^*) \leq w(R)$.

Finally, we define $\R' = \bigcup_{R \in \R} E_R$, where $E_R = \emptyset$ if $\size{V(R)} = 1$.

The details are described in Algorithm~\ref{alg3.2}.

\begin{algorithm}[t]
\caption{CONNECT}
\label{alg3.2}
\textbf{Input:} An instance $G=(V=V_g\cup V_b, E, w)$, $\A$, $\B'$, and $\F$. \\
\textbf{Output:} A set of edges $\R'$.
\begin{algorithmic}[1]
\State Initialize $\R'=\emptyset$.

\State\label{alg52} Obtain a complete graph $\widetilde{G}=(\widetilde{V},\widetilde{E},\widetilde{w})$ by contracting each $F\in\F$ into a vertex $u_F$, where $\widetilde{w}$ is defined as in (\ref{eq3.3}).

\State\label{alg53} Set $\widetilde{V}_a=\{u_F\mid F\in\F, V(F)\cap V_a\subseteq V^2_g, V(F)\setminus V_a\neq\emptyset\}$.

\State\label{alg54} Guess the partition of $\widetilde{V}_a$, $\widetilde{\V}=\{\widetilde{V}_{R^{*}}\cap \widetilde{V}_a\mid R^*\in\R^*\}$, where $\widetilde{V}_{R^*}=\{u_F\mid F\in\F_{R^*}\}$, $\F_{R^*}$ is the set of trees in $\F$ connected by $R^*$, and $R^{*}$ is obtained by shortcutting the internal vertices of $R\in\R$, ensuring that each vertex in $\widetilde{V}_a$ appears in only one of the paths in $\R^{*}=\{R^{*}\mid R\in\R, \size{V(R)}\geq 2\}$.

\For{each good chain $R\in \R$ with $V(R)\geq 2$ (in order $T^*$)}\label{alg55}
\State Denote the pair anchors w.r.t.\ $R$ by $x$ and $x'$. \Comment{{We do not know $x$ and $x'$ but we know $F_x$ and $F_{x'}$ by $\B'$.}}

\State\label{connects1} Set $\widetilde{V}_{R^*}\coloneqq(\widetilde{V}_{R^*}\cap \widetilde{V}_a)\cup \{u_{F_x},u_{F_{x'}}\}$.

\State\label{connects2} Using the DP~\cite{bellman1962dynamic}, find a minimum-weight TSP tour in $\widetilde{G}[\widetilde{V}_{R^*}]$ if $F_x=F_{x'}$, and a minimum-weight TSP path between $u_{F_x}$ and $u_{F_{x'}}$ in $\widetilde{G}[\widetilde{V}_{R^*}]$ otherwise.

\State\label{alg59} Let $E_R$ be the corresponding set of edges in $G$ w.r.t.\ the obtained tour or path.
\State\label{alg510} Update $\R'\coloneqq \R'\cup E_R$.
\EndFor
\end{algorithmic}
\end{algorithm}

\begin{lemma}\label{lemmaCONNECT}
{CONNECT} can compute $\R'$, a set of edges between good vertices, in $2^{\O(q\log q)}$ guesses such that 
\begin{enumerate}
    \item $\A\cup\B'\cup\R'\cup\F$ forms a connected graph $G_\A$;
    \item $\size{Odd(G_\A)\cap V(F)}$ is even for each $F\in\F$;
    \item $Odd(G_\A)\cap V_b=\emptyset$;
    \item $w(\R')\leq w(\R)$.
\end{enumerate}
Moreover, each guess takes $\O(q^2\cdot 2^q)\cdot n^{\O(1)}$ time.
\end{lemma}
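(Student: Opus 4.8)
The plan is to fix one correct run of the algorithm, verify the four properties for that run, and then bound the number of guesses and the per-guess time. Throughout, I would assume that all guesses inherited from LIMB are correct (so $\F$, $\B'$ and the layout $\overline{a_1\dots b_1\dots a_{\size{\A}}\dots b_{\size{\A}}\dots}$ agree with $T^*$ as in Lemma~\ref{thelimb}), and that the partition $\widetilde{\V}$ guessed in Step~\ref{alg54} is exactly the one induced on $\widetilde V_a$ by the good chains of $T^*$ after the shortcutting that defines $\R^*$; then, for every good chain $R$ with $\size{V(R)}\ge 2$, the set $\widetilde V_{R^*}$ rebuilt in Step~\ref{connects1} equals $\{u_F:F\in\F_{R^*}\}$, since $\widetilde V_{R^*}\cap\widetilde V_a$ is read from $\widetilde{\V}$ and the rest is $\{u_{F_x},u_{F_{x'}}\}$ for the two pair anchors $x,x'$ of $R$, whose trees $F_x,F_{x'}$ are known from $\B'$. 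The weight bound (Property~4) is then straightforward: for each such $R$ I would show $w(E_R)\le w(R^*)\le w(R)$. The second inequality holds because $R^*$ arises from $R$ by shortcutting internal good vertices, each shortcut going through a good vertex and hence not increasing the weight by Property~\ref{LB0}. For the first inequality, $R^*$ is constructed so that, after contracting the trees of $\F$, its image is a Hamiltonian path of $\widetilde G[\widetilde V_{R^*}]$ with endpoints $u_{F_x},u_{F_{x'}}$ (a Hamiltonian tour when $F_x=F_{x'}$); since each $\widetilde w(u_F,u_{F'})$ is a minimum over endpoint pairs by (\ref{eq3.3}), this image has $\widetilde w$-weight at most $w(R^*)$, while the DP in Step~\ref{connects2} returns a minimum-weight path/tour, which is no heavier. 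Summing over all $R$ (with $E_R=\emptyset$ when $\size{V(R)}=1$) yields $w(\R')\le w(\R)$.

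For connectivity (Property~1) I would first observe that $\A\cup\B'\cup\R^*\cup\F$ is connected: it is obtained from the connected graph $T^*$ by replacing each good chain $R$ with $R^*$, which still meets every tree of $\F$ that $R$ met, while every good vertex lies in some tree of $\F$ and every bad vertex lies on a bad chain in $\A$ whose endpoints are joined by limbs of $\B'$ to guessed anchors (recall $\B'$ has the same incidence pattern as $\B$). Since $E_R$ spans exactly the tree-set $\F_{R^*}$ spanned by $R^*$, replacing each $R^*$ by $E_R$ keeps the contracted graph connected, and uncontracting the connected trees of $\F$ shows $G_\A=\A\cup\B'\cup\R'\cup\F$ is connected. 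Property~3 is immediate: every bad vertex has degree exactly $2$ in $G_\A$ — two $\A$-edges if it is interior to a bad chain, and otherwise its $\A$-edges together with the limbs of $\B'$ at its endpoint total $2$, because LIMB gives each bad-chain endpoint precisely the limbs it has in $T^*$ — while neither $\R'$ nor $\F$ touches $V_b$.

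The main obstacle is Property~2, which I would establish by boundary counting. For any graph, $\size{Odd(G_\A)\cap V(F)}$ is even if and only if the number of edges of $G_\A$ with exactly one endpoint in $V(F)$ is even, since $\sum_{v\in V(F)}\deg_{G_\A}(v)$ is twice the number of edges inside $V(F)$ plus the number of edges leaving it. The edges of $G_\A$ leaving $V(F)$ are precisely the limbs of $\B'$ at $F$ together with the $\R'$-edges joining $F$ to another tree (edges of $\A$ and of $\F$ never cross out of $V(F)$). In $T^*$ the number of edges leaving $V(F)$ is even (each maximal arc of $T^*$ inside $V(F)$ contributes two boundary edges), and it splits into the limbs of $\B$ at $F$ and the good-chain edges of $T^*$ leaving $F$; hence these two counts share a parity, which — as each single anchor carries two limbs and each pair anchor one — equals the parity of the number of pair anchors inside $V(F)$. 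On the algorithm's side, $\B'$ keeps the number of limbs at $F$ unchanged, and each $E_R$ contributes an even number of edges leaving $V(F)$ unless $u_F$ is an endpoint of the DP path for $R$, which happens exactly when $F$ contains a pair anchor of $R$; summing over $R$, the number of $\R'$-edges leaving $V(F)$ again matches the parity of the number of pair anchors inside $V(F)$. Adding the two contributions gives an even total, as needed. The delicate part is matching, for each good chain, the DP-path endpoints with the limb endpoints, which is exactly where correctness of the LIMB guesses enters; I expect this bookkeeping — not any single inequality — to be the crux.

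Finally, the only new choices in CONNECT are the partition $\widetilde{\V}$ of $\widetilde V_a$, and since $\size{\widetilde V_a}\le\size{\F}\le q$ by (\ref{eq3.2}), the number of partitions, hence of guesses, is $2^{\O(q\log q)}$. For a fixed guess, forming $\widetilde G$ and reconstructing the edge sets costs $n^{\O(1)}$, and for each of the at most $q$ good chains the DP on $\widetilde G[\widetilde V_{R^*}]$, a graph on at most $\size{\F}\le q$ vertices, runs in $\O(2^q\cdot q^2)$ time; the per-guess cost is thus $\O(q^2\cdot 2^q)\cdot n^{\O(1)}$.
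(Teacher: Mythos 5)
Your proposal is correct and follows essentially the same route as the paper: the paper establishes Property~2 by contracting each tree of $\F$ and observing that $G_\A/\F$ is Eulerian (so the number of edges crossing each $V(F)$ is even), which is exactly the boundary-parity bookkeeping you carry out, and the arguments for connectivity, the degree-2 bad vertices, the weight bound $w(E_R)\le w(R^*)\le w(R)$, and the $q^q\subseteq 2^{\O(q\log q)}$ guess count with $\O(q^2\cdot 2^q)\cdot n^{\O(1)}$ per-guess DP cost all match. One small correction: you justify $w(R^*)\le w(R)$ by Property~\ref{LB0}, but the paper explicitly notes that this property does \emph{not} hold in the $q$-parameterized setting; the right justification is that all vertices of a good chain lie in $V_g$ and $G[V_g]$ is metric by the definition of a violating set, so shortcutting internal vertices of $R$ uses only all-good triangles.
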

\begin{proof}
In {CONNECT}, there is only one level-1 loop, which guesses the partition $\widetilde{\V}=\{\widetilde{V}_{R^{*}}\cap \widetilde{V}_a\mid R^*\in\R^*\}$ of $\widetilde{V}_a$ in Step~\ref{alg54}.

We first consider the number of iterations in this loop.

In the partition, we should also know the corresponding set $\widetilde{V}_{R^*}\cap \widetilde{V}_a$ for each $R\in\R$. 
We can obtain all such possible partitions as follows.

First, initialize a set $\widetilde{V}'_R$ for each $R\in\R$. Then, enumerate all cases of partitioning every element in $\widetilde{V}_a$ into exactly one of the sets in $\{\widetilde{V}'_R\mid R\in\R\}$. 
This incurs at most $\size{\R}^{\size{\widetilde{V}_a}}$ distinct iterations.
By (\ref{3.1}), we have $\size{\R}\leq q$. By (\ref{eq3.2}) and Steps~\ref{alg52} and \ref{alg53}, we know 
\begin{equation}\label{sizeva}
\size{\widetilde{V}_a}\leq\size{\F}\leq q.
\end{equation} 
Therefore, we have at most $\size{\R}^{\size{\widetilde{V}_a}}\leq q^{q}\subseteq 2^{O(q\log q)}$ distinct iterations.

In each iteration, the running time of {CONNECT} is dominated by computing a TSP tour or path in $\widetilde{G}[\widetilde{V}_{R^*}]$ by using the DP method~\cite{bellman1962dynamic,held1962dynamic} in Step~\ref{connects2} for each good chain $R\in\R$ with $\size{V(R)}\geq 2$. Thus, the running time is at most
\[
\sum_{R^*\in\R^*}\O(\size{\widetilde{V}_{R^*}}^2\cdot 2^{\size{\widetilde{V}_{R^*}}})\cdot n^{\O(1)}.
\]
Since $\widetilde{\V}=\{\widetilde{V}_{R^{*}}\cap \widetilde{V}_a\mid R^*\in\R^*\}$ forms a partition of $\widetilde{V}_a$, we have 
$\sum_{R^*\in\R^*}\size{\widetilde{V}_{R^*}\cap \widetilde{V}_a}=\size{\widetilde{V}_a}\leq q$ by (\ref{sizeva}).
In addition,  
$\size{\widetilde{V}_{R^*}}\leq \size{\widetilde{V}_{R^*}\cap \widetilde{V}_a}+2$ by Step~\ref{connects1}, and then $\sum_{R^*\in\R^*}\size{\widetilde{V}_{R^*}}=\O(q)$.
Thus, we have
\[
\sum_{R^*\in\R^*}\O(\size{\widetilde{V}_{R^*}}^2\cdot 2^{\size{\widetilde{V}_{R^*}}})=\O(q^2\cdot 2^q).
\] 
Therefore, each iteration takes $\O(q^2\cdot 2^q)\cdot n^{\O(1)}$ time.

Then, we prove the four properties of $\R'$ in this lemma. 

First, we prove that $G_\A=\A\cup\B'\cup\R'\cup\F$ is connected.

Since $T^*=\A\cup\B\cup\R$, we know that $\A\cup\B\cup\R$ forms a connected graph. By Lemma~\ref{thelimb}, for any limb $vx\in \B$, where $v$ is a bad vertex and $x$ is an anchor, the guessed anchor $x'$ for $x$ satisfies $x'\in V_x\subseteq V_{F_x}$, and there exists a limb $vx'\in \B'$ connecting $v$ to $F_x$. 
Thus, $\A\cup\B'\cup\R\cup\F$ remains connected. 

For each good chain $R\in\R$ with $V(R)\geq 2$, it connects all trees in $\F_R$. 
Recall that $\R^*$ is obtained by shortcutting the internal vertices of chains in $\R$, ensuring that each vertex in $\widetilde{V}_a$ appears in only one of the paths in $\R^{*}$.
Thus, $\A\cup\B'\cup\R^*\cup\F$ remains a connected graph.

By Step~\ref{alg54}, we have 
\[
\R^{*}=\{R^{*}\mid R\in\R, \size{V(R)}\geq 2\}.
\]
Moreover, by Steps~\ref{alg55}-\ref{alg510}, we know 
\[
\R'=\{E_R\mid R^*\in \R^*\}.
\]
Specifically, for each $R^*\in\R^*$, the edges in $E_R$ connects all trees in $\F_{R^*}$.
Thus, $G_\A=\A\cup\B'\cup\R'\cup\F$ is connected.

Hence, the first property holds.

Then, we prove that $\size{Odd(G_\A)\cap V(F)}$ is even for each $F\in\F$.

Consider the graph $G_\A/\F$, which is obtained by contracting all vertices in $V(F)$ into a single vertex $F$ for each $F\in\F$. The edges of $G_\A/\F$ are defined as follows: (1) include all edges in $\A$; (2) for any edge $vx\in\B'$, where $v\in V_b$ and $x\in V(F)$, add an edge $vF$ to $G_\A/\F$; (3) for any edge $xx'\in\R'$, where $x\in V(F)$ and $x'\in V(F')$ and $F\neq F'$, add an edge $FF'$ to $G_\A/\F$. 

By LIMB, each bad vertex is a 2-degree vertex in $A\cup\B'$, and thus it remains a 2-degree vertex in $G_\A/\F$.  
By Step~\ref{connects2} of CONNECT, each contacted vertex $F\in\F$ is a vertex of even degree in $G_\A/\F$. Moreover, since $G_\A=\A\cup\B'\cup\R'\cup\F$ is connected, the graph $G_\A/\F$ is also connected. Therefore, $G_\A/\F$ is an Eulerian graph (see (e) in Figure~\ref{fig3}).

Now, we consider the graph $G_\A$. Let $d(v)$ denote the degree of vertex $v$ in $G_\A$.

Since $G_\A/\F$ is Eulerian, we know that $Odd(G_\A)\subseteq V_g$. 
For any $F\in\F$, let $m_{out}$ denote the number of edges with one vertex in $V(F)$ and the other vertex in $V\setminus V_F$ in $G_\A$, and $m_{in}$ the number of edges with both endpoints in $V(F)$ in $G_\A$. It is clear that
\begin{equation}\label{inout}
\sum_{v\in V(F)}d(v) = m_{out} + 2\cdot m_{in}.
\end{equation}
Since $G_\A/\F$ is an Eulerian graph, $m_{out}$ must be even. 
Thus, by (\ref{inout}), we know that $\sum_{v\in V(F)}d(v)$ is even, which implies that the number of odd-degree vertices in $V(F)$ is even, i.e., $\size{Odd(G_\A)\cap V(F)}$ is even.

Hence, the second property holds.

Next, we prove that $Odd(G_\A)\cap V_b=\emptyset$.

As mentioned above, $Odd(G_\A)\subseteq V_g$, i.e., $Odd(G_\A)\cap V_b=\emptyset$.
In fact, since $\R'$ and $\F$ only contain edges between good vertices, each bad vertex in $G_\A$ is a 2-degree vertex. 

Hence, the third property holds.

Finally, we prove that $w(\R')\leq w(\R)$.

By the definition of $\F_R$, for each good chain $R\in\R$, it connects all trees in $\F_R$. Recall that the chain $R^*$ is obtained by shortcutting the internal vertices in $R$, and thus it satisfies $w(R^*)\leq w(R)$. 

Moreover, since $R^*$ corresponds to a TSP tour or path in $\widetilde{G}[\widetilde{V}_{R^*}]$ and {CONNECT} uses the DP method to compute a minimum-weight TSP tour or path, by Step~\ref{connects2}, we know that $E_R$, the set of edges w.r.t.\ the TSP tour or path, has weight at most 
\[
w(E_R)\leq w(R^*)\leq w(R).
\]
Therefore, by the definition of $\R'$ we have 
\begin{align*}
w(\R')&=\sum_{R\in \R, V(R)\geq2}w(E_R)\leq \sum_{R\in \R, V(R)\geq2}w(R)\leq \sum_{R\in \R}w(R) = w(R).   
\end{align*}

Hence, the fourth property holds.

Therefore, the lemma holds.
\end{proof}

\subsubsection{The sub-algorithm: SHORTCUT}\label{SC4.1.3}
Given $G_\A=\A\cup\B'\cup\R'\cup\F$, Step~\ref{3.5} of ALG.4 constructs an Eulerian graph $G'_\A=\A\cup\B'\cup\R'\cup\F\cup\M$ by using $\M$ to fix the degree parity of the vertices in $Odd(G_\A)$.
By Lemma~\ref{lemmaCONNECT}, this step is feasible. 

To analyze the quality of $\M$, we use the following lemma.

\begin{lemma}\label{usecoro}
It holds that $w(\M)\leq w(\F)$.
\end{lemma}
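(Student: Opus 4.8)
The plan is to show the bound tree-by-tree, i.e.\ to prove $w(M_F)\le w(F)$ for every $F\in\F$, and then sum over $F$ since $\M=\bigcup_{F\in\F}M_F$ and the trees are vertex-disjoint. Fix a tree $F\in\F$. By Lemma~\ref{lemmaCONNECT}, the set $O_F\coloneqq Odd(G_\A)\cap V(F)$ has even cardinality, so a perfect matching on $O_F$ exists, and $M_F$ is defined to be a minimum-weight such matching in $G[O_F]$. Hence it suffices to exhibit \emph{some} perfect matching on $O_F$ whose weight is at most $w(F)$; the minimality of $M_F$ then gives $w(M_F)\le w(F)$.

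The natural candidate matching is built from the tree $F$ itself. Since $F$ is a tree (in particular connected) spanning $V(F)$, and $O_F\subseteq V(F)$ is an even-size subset, a standard argument produces a set of $|O_F|/2$ edge-disjoint paths in $F$, each joining two vertices of $O_F$, such that every edge of $F$ is used by at most one path: root $F$ arbitrarily, process vertices from the leaves upward, and whenever a subtree has accumulated an even number of ``unmatched'' $O_F$-vertices pair them off within that subtree, passing at most one unmatched vertex up to the parent. This yields a pairing $\{(u_1,v_1),\dots,(u_m,v_m)\}$ of $O_F$ together with edge-disjoint $F$-paths $P_j$ from $u_j$ to $v_j$, so $\sum_j w(P_j)\le w(F)$. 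Now I would replace each path $P_j$ by the single edge $u_jv_j$. The key point is that all vertices of $F$ are \emph{good} (since $V(F)\subseteq V_g$), so by Property~\ref{LB0} every triangle along $P_j$ involving a good vertex satisfies the triangle inequality; repeatedly shortcutting the internal vertices of $P_j$ therefore does not increase the weight, giving $w(u_jv_j)\le w(P_j)$. (Equivalently: $G[V_g]$ is a metric graph, so $w$ restricted to $V_g$ satisfies the triangle inequality and hence $w(u_jv_j)\le w(P_j)$ directly.) Consequently $\{u_jv_j\}_{j=1}^m$ is a perfect matching on $O_F$ of weight $\sum_j w(u_jv_j)\le\sum_j w(P_j)\le w(F)$, so $w(M_F)\le w(F)$.

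Finally, summing over all $F\in\F$ and using that the $V(F)$ are pairwise disjoint (so the $M_F$ are edge-disjoint and $w(\M)=\sum_{F\in\F}w(M_F)$), we conclude $w(\M)=\sum_{F\in\F}w(M_F)\le\sum_{F\in\F}w(F)=w(\F)$. I expect the only mildly delicate step to be the construction of the edge-disjoint paths decomposing an even subset of a tree's vertices into pairs; this is routine (a leaf-to-root parity sweep) but should be stated carefully so that edge-disjointness—and hence $\sum_j w(P_j)\le w(F)$—is transparent. Everything else is an immediate appeal to Property~\ref{LB0} (metricity on $V_g$) and to the minimality of $M_F$.
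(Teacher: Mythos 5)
Your proof is correct and follows essentially the same route as the paper: both reduce to the claim that for a tree $F$ and an even-size subset of its vertices, a minimum-weight matching on that subset costs at most $w(F)$, proved by pairing the odd vertices along edge-disjoint tree paths and shortcutting each path to a single edge (the paper obtains edge-disjointness by an uncrossing exchange on an arbitrary pairing, you by a leaf-to-root parity sweep; both are standard). One small caution: do not cite Property~\ref{LB0} here, since the paper explicitly notes it fails in the parameter-$q$ setting; your parenthetical justification---that $G[V_g]$ is metric because $V_b$ is a violating set, so the triangle inequality holds for all triangles inside $V(F)\subseteq V_g$---is the correct one and should be the one you keep.
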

\begin{proof}
We begin by proving the following well-known claim (see, e.g., Lemma~19 in \cite{DBLP:journals/mp/BlauthTV23}).

\begin{claim}\label{MwithF}
Let $T$ be any tree and let $S \subseteq V(T)$ be a subset of even cardinality. Then any minimum-weight matching $\M_S$ on $S$ satisfies $w(\M_S) \leq w(T)$.
\end{claim}
\begin{proof}[Claim Proof]
Since $\size{S}\bmod 2=0$, by arbitrarily pairing all vertices in $S$, we obtain a set of paths $\B$, where each $v\sim v'\in\B$ denotes the unique path between $v$ and $v'$ on $T$. 
Assume that there exist two paths $v...xy...v',u...xy...u'\in\B$ sharing at least one edge $xy$. We can modify these two paths into the paths $v\sim u$ and $v'\sim u'$, in such a way that the total number of edges on these two paths is reduced by at least 1. 
Therefore, by repeatedly performing the above procedure, we can ensure that $\B$ is a set of edge-disjoint paths. 

By the triangle inequality, we have $w(v,v')\leq w(v\sim v')$ for each $v\sim v'\in \B$. Thus, there is a matching $\{vv'\mid v\sim v'\in\B\}$ on $S$ with weight at most $w(\B)\leq w(T)$. Since $\M_S$ is a minimum-weight matching on $S$, we have $w(M_S)\leq w(T)$.
\end{proof}

Then, we are ready to prove Lemma~\ref{usecoro}.

By Step~\ref{3.5} of ALG.4, $\M$ is obtained by finding a minimum-weight matching $M_F$ in the graph $G[Odd(G_\A)\cap V(F)]$ for each $F\in\F$. This is feasible since $Odd(G_\A)\cap V(F)$ is even for any $F\in\F$ by Lemma~\ref{lemmaCONNECT}.
By Claim~\ref{MwithF}, we obtain $w(M_F)\leq w(F)$ for each $F\in\F$.
Thus, we have $w(\M)\leq w(\F)$.
\end{proof}

Since $G'_\A$ is Eulerian, we use the sub-algorithm {SHORTCUT} to take shortcuts on $G'_\A$ to obtain a TSP tour $T_3$ in $G$ with a non-increasing weight.

In $G'_\A$, each guessed single anchor connects exactly two bad chains. Since a triangle containing one bad vertex may violate the triangle inequality, in {SHORTCUT}, we first construct a new Eulerian graph $G''_\A$ by deleting some edges in $\M\cup\F$ and duplicating some guessed single anchors. Then, we can use a similar idea in Lemma~\ref{LB2.3} to take shortcuts on $G''_\A$ to obtain $T_3$ with a non-increasing weight.

The graph $G''_\A$ is constructed as follows. 
For each $F\in\F$ where $V(F)$ includes a guessed single anchor, we consider two cases.

\textbf{Case 1:} If $V(F)$ consists only of guessed single anchors, then it contains neither internal vertices nor pair anchors, implying that no edge in $\R'$ connects to $F$. In this case, we simply delete all edges in $E(F)\cup M_F$, which does not affect connectivity. Each guessed single anchor is then labeled as a \emph{marked bad vertex} (which is still a good vertex).


\textbf{Case 2:} If $V(F)$ contains at least one internal vertex or pair anchor, then there exists an edge in $\R'$ or a limb in $\B'$ connecting $F$. 
For each guessed single anchor $x\in V(F)$, let its associated limbs be $xy$ and $xz$, where $y, z$ are bad vertices. 
We create a copy $x'$ of $x$, remove $xy$ and $xz$, and add $x'y$ and $x'z$. We refer to $x'$  as a marked bad vertex.

These two cases are illustrated in Figure~\ref{fig4}, and the detailed procedure is given in Algorithm~\ref{alg3.3}.

\begin{figure}[t]
    \centering
    
    \begin{subfigure}[b]{0.8\linewidth}
        \centering
        \includegraphics[width=0.5\linewidth]{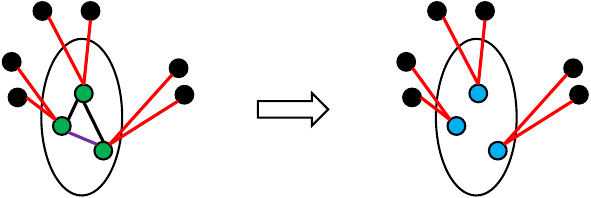}
        \caption{$V(F)$ consists only of guessed single anchors.}
    \end{subfigure}
    
    \vspace{1mm}
    
    \begin{subfigure}[b]{0.8\linewidth}
        \centering
        \includegraphics[width=0.5\linewidth]{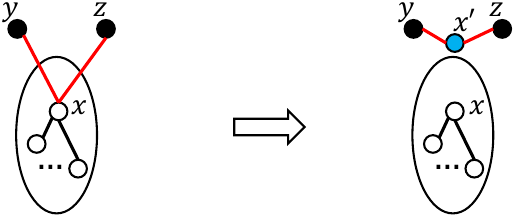}
        \caption{$V(F)$ contains one guessed single anchor and at least one internal vertex or pair anchor.}
    \end{subfigure}
    
    \caption{An illustration of the two cases. Each black node denotes a bad vertex, each green or white node denotes a good vertex, each blue node denotes a marked bad vertex; the edges in $E(F)$ are shown in black, the edges in $\M_F$ are shown in purple, and the limbs are shown in red. In (a), the black and purple edges are deleted; In (b), $x'$ is a copy of $x$.}
    \label{fig4}
\end{figure}

\begin{algorithm}[t]
\caption{SHORTCUT}
\label{alg3.3}
\textbf{Input:} $G=(V=V_g\cup V_b, E, w)$, $\A$, $\B'$, $\R'$, $\F$, and $\M$. \\
\textbf{Output:} A TSP tour $T_3$ in $G$.
\begin{algorithmic}[1]

\For{each $F\in\F$ such that $V(F)$ contains one guessed single anchor}
\If{$V(F)$ consists only of guessed single anchors}
\State\label{mody1} Delete all edges in $E(F)\cup M_F$, and refer to all guessed single anchors as marked bad vertices.
\Else\label{mody2}
\For{each guessed single anchor $x\in V(F)$}\label{singleanchor}\label{mody21}
\State\label{mody22} Denote the limbs w.r.t.\ $x$ by $xy$ and $xz$.
\State\label{mody23} Create a marked bad vertex $x'$.
\State\label{mody24} Replace edges $xy$ and $xz$ with $x'y$ and $x'z$.
\EndFor
\EndIf
\EndFor
\State Denote the above graph by $G''_\A$.
\State Obtain a TSP tour $T_3$ in $G$ (see Lemma~\ref{alg3shortcut}).
\end{algorithmic}
\end{algorithm}


\begin{lemma}\label{alg3shortcut}
Given $G''_\A$, there is an $\O(n^2)$-time algorithm to obtain a TSP tour $T_3$ in $G$ with weight at most $w(G''_\A)$. 
\end{lemma}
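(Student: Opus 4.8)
The plan is to construct $T_3$ from $G''_\A$ in two phases: first argue that $G''_\A$ is still Eulerian with weight at most $w(G'_\A) = w(G_\A\cup\M)$, and then shortcut it down to a TSP tour using essentially the argument of Lemma~\ref{LB2.3}, with the marked bad vertices playing the role of the original bad vertices. First I would verify that the surgery in {SHORTCUT} preserves the Eulerian property: in Case~1 we delete the edge set $E(F)\cup M_F$ of a component-like piece whose vertex set meets nothing else, so connectivity is unaffected and every deleted vertex had even degree; in Case~2, replacing $xy,xz$ by $x'y,x'z$ simply moves a degree-$2$ vertex's two incident limbs onto a fresh copy, so all degrees stay even and connectivity is preserved because $x$ still has its remaining (tree/matching) edges and the component still hangs together through the internal vertex or pair anchor guaranteed to be in $V(F)$. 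Since we only delete edges or relocate existing edges to a copy (of equal weight, by $w(x,y)=w(x',y)$ as $x'$ is a copy of $x$ — here I would note the input graph metric values are inherited by the copy), we get $w(G''_\A)\le w(G'_\A)\le 3\cdot\OPT$.

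Next I would describe the shortcutting. After the surgery, the key structural facts are: (i) the only edges among $\{$bad vertices $\cup$ marked bad vertices$\}$ are the bad-chain edges of $\A$ together with the limbs of $\B'$; (ii) each bad vertex has degree exactly $2$ in $G''_\A$ and is incident only to bad vertices (via $\A$) and marked bad vertices (via $\B'$); (iii) each marked bad vertex — being a former single anchor with its two limbs reattached, or a Case~1 vertex now carrying only limbs — is incident to at most two bad vertices; (iv) every remaining good vertex that is not a marked bad vertex lies in some tree $F\in\F$ or on an edge of $\R'$, and any triangle using such a good vertex still satisfies the triangle inequality by Property~\ref{LB0} unless\ldots here is the subtlety. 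Property~\ref{LB0} only guarantees good-containing triangles when the good vertex is a \emph{true} good vertex; but marked bad vertices are still good vertices of $G$, so triangles through them are fine too. The only triangles we must avoid shortcutting through are those with all three vertices bad. So the shortcutting goes: walk an Eulerian tour of $G''_\A$; the bad-vertex/marked-bad-vertex part decomposes into maximal chains whose interiors are bad vertices and whose structure interleaves with marked bad vertices and good vertices; repeated good (including marked bad) vertices and repeated bad vertices can each be shortcut via an adjacent good vertex without increasing weight, exactly as in Lemma~\ref{LB2.3}, because every edge duplicated in the Eulerian tour that touches a bad vertex can be rerouted through a good/marked-bad neighbor — and crucially, after the Case~2 surgery every single-anchor-induced "doubling" of a bad chain now passes through a marked bad vertex that has a good neighbor on one side.

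The main obstacle I expect is exactly this last point: showing that whenever a bad vertex (or a whole bad chain) is traversed twice in the Eulerian tour of $G''_\A$, at least one of the traversals can be locally replaced by a shortcut through a genuinely good vertex (true good or marked bad) so that Property~\ref{LB0} applies and the weight does not go up. The danger is a configuration where a bad chain is sandwiched between two marked bad vertices on both sides, so that shortcutting would force a bad--bad--bad triangle. I would handle this the way Lemma~\ref{LB2.3} does: observe that $G''_\A$ is connected and Eulerian, so any vertex at which a chain is "doubled" has even degree $\ge 4$, hence at least one endpoint of the doubled segment is incident to $\ge 2$ edges leaving the bad/marked-bad part, i.e.\ to a true internal/tree good vertex; route the shortcut through that vertex. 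After this, a final cleanup pass removes repeated good vertices (trivially weight-non-increasing) and repeated bad vertices (each extra occurrence has a good neighbor). Counting the Eulerian-tour construction, the chain-shortcuts, and the cleanup, every step is linear or near-linear, giving the claimed $\O(n^2)$ bound (the $n^2$ rather than $n$ slack absorbing the bookkeeping over the $\le q$ trees and the marked-bad relabeling), and the output is a TSP tour of $G$ with $w(T_3)\le w(G''_\A)\le w(G'_\A)$.
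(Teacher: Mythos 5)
Your proposal has a genuine gap at its core: the shortcutting phase repeatedly invokes Property~\ref{LB0} (``any triangle using such a good vertex still satisfies the triangle inequality''), and you conclude that ``the only triangles we must avoid shortcutting through are those with all three vertices bad.'' That property belongs to the parameter-$p$ setting and is explicitly false here: in the parameter-$q$ setting, ``bad'' means membership in the violating set, and the only guarantee is that $G[V_g]$ is metric. A triangle with one bad vertex and two good vertices may violate the triangle inequality, so shortcutting a bad vertex via an adjacent good neighbor --- the engine of your argument and of Lemma~\ref{LB2.3} --- is not weight-non-increasing here. The paper's proof avoids this entirely: it establishes (Claim~\ref{alg3eulerian}) that in $G''_\A$ every bad vertex has degree exactly $2$ (so it appears exactly once in the Eulerian tour and is \emph{never} shortcut --- your worry about ``bad chains traversed twice'' cannot arise) and every good vertex is incident to at most one bad vertex (this is precisely what the marked-bad-vertex surgery buys), so that every repeated occurrence of a good vertex, except at most one, is flanked by two good vertices; the only shortcuts ever taken are through triangles with all three vertices in $V_g$, where the metric property applies. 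You actually list fact (ii) (bad vertices have degree $2$) but do not follow it to this conclusion, and instead fall back on the $p$-style rerouting through ``good/marked-bad neighbors,'' which would require exactly the bad-good-good triangle inequality that fails.

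A secondary omission: the tour you obtain on $G''_\A$ contains both a guessed single anchor $x$ and its copy $x'$, which are the same vertex of $G$, so it is not yet a TSP tour of $G$. The paper finishes by noting that $x$ is incident only to good vertices in $G''_\A$ (and remains so after shortcutting), so the occurrence of $x$ can be removed via an all-good triangle while $x'$ is kept. Your proposal creates the copies but never removes the duplicates.
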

\begin{proof}
Recall that SHORTCUT may introduce marked bad vertices. Although a marked bad vertex may in fact be a good vertex, we treat all marked bad vertices as bad vertices in the following.

We begin by proving the following two claims.

\begin{claim}\label{alg3eulerian}
The graph $G''_\A$ remains Eulerian, where each bad vertex is a 2-degree vertex and each good vertex is incident to at most one bad vertex.
\end{claim}
\begin{proof}[Claim Proof]
We first prove that in the graph $G''_\A$, each bad vertex is a 2-degree vertex and each good vertex is incident to at most one bad vertex.

In the graph $G_\A=\A\cup\B'\cup\R'\cup\F$, as shown in the proof of Lemma~\ref{lemmaCONNECT}, each bad vertex is a 2-degree vertex. Moreover, by LIMB, each good vertex is incident to at most two bad vertices, and each good vertex that is incident to two bad vertices must be a guessed single anchor.

Since $\M$ consists only of edges between good vertices, the graph $G'_\A=\A\cup\B'\cup\R'\cup\F\cup\M$, obtained by adding the edges in $\M$ to $G_\A$, also satisfies the above two properties.

By SHORTCUT, the graph $G''_\A$ is obtained by modifying all guessed single anchors in $G'_\A$.
Specifically, if $V(F)$ consists only of guessed single anchors, by Step~\ref{mody1}, all guessed single anchors become 2-degree bad vertices.
Otherwise, by Steps~\ref{mody21}-\ref{mody24}, each guessed single anchor in $V(F)$ is modified into a good vertex (connecting no bad vertices), and a 2-degree bad vertex. 

Thus, in the graph $G''_\A$, each bad vertex is a 2-degree vertex, and each good vertex is incident to at most one bad vertex. In fact, each good vertex that is incident to one bad vertex must be a pair anchor.

Next, we prove that $G''_\A$ is Eulerian.

By Lemma~\ref{lemmaCONNECT} and Step~\ref{3.5} of ALG.4, $G'_\A$ is Eulerian, and thus each vertex in $G'_\A$ has even degree.
Since the modifications in SHORTCUT do not change the parity of any vertex's degree, each vertex in $G'_\A$ still has even degree.

Then, we only need to prove that $G''_\A$ is connected.

Assume that $G''_\A$ is not connected. 
By SHORTCUT, $G''_\A$ still contains all bad chains in $\A$.
Thus, there exist at least two connected components in $G''_\A$, denoted as $C_1$ and $C_2$, such that $C_1$ contains at least one bad chain in $\A$ and $C_2$ contains no bad chains, only if all bad chains are contained in $C_1$. Moreover, we can choose $C_2$ such that when $C_2$ contains one bad chain, there exists a good chain $R\in\R$ that connects one bad chain in $C_1$ and one bad chain in $C_2$.

Next, we consider the following two cases.

\textbf{Case~1:} $C_2$ contains no bad chains in $\A$. In this case, all bad chains are contained in $C_1$. By SHORTCUT, a marked bad vertex connects two bad chains in $\A$, and a pair anchor still connects one bad chain in $\A$. Then, $C_2$ must contain a tree $F\in\F$ such that $V(F)$ contains no pair anchors. Moreover, by Step~\ref{mody2}, $V(F)$ must contain an internal vertex. 

Let $\F^*$ denote the set of trees in $\F$ that do not consist only of single guessed anchors before SHORTCUT.
Then, by CONNECT, $\R'\cup\F^*$ ensures that all internal vertices must be connected to some pair anchor, say $x$. This property still holds even after SHORTCUT by Steps~\ref{mody21}-\ref{mody24}.
By LIMB, there must be a limb connecting $x$ to a bad vertex.
However, since $C_1$ contains all bad chains in $\A$, the internal vertex in $C_2$ must connect some bad vertex in $C_1$, which contradicts the condition that $C_1$ and $C_2$ are not connected.


\textbf{Case~2:} $C_2$ contains at least one bad chain in $\A$.
In this case, there exists one good chain $R\in\R$ that connects one bad chain in $C_1$ and one bad chain in $C_2$. Assume that $R=x...x'\in\R$. Then, similar to the analysis in Case~1, by CONNECT, $\R'\cup\F^*$ ensures that $x$  must be connected to $x'$. This property still holds even after SHORTCUT by Steps~\ref{mody21}-\ref{mody24}. Thus, this contradicts the condition that $C_1$ and $C_2$ are not connected.

Therefore, $G''_\A$ is an Eulerian graph.
\end{proof}

\begin{claim}\label{usefulclaim}
Given any Eulerian graph $G'$ with $\O(m)$ edges, where each bad vertex has degree 2 and each good vertex is incident to at most one bad vertex, there exists an $\O(m)$-time algorithm to compute a TSP tour $T'$ with  $w(T')\leq w(G')$. 
Moreover, any good vertex that is not incident to a bad vertex in $G'$ remains not incident to any bad vertex in $T'$.
\end{claim}
\begin{proof}[Claim Proof]
Since $G'$ is Eulerian and consists of $\O(m)$ edges, an Eulerian tour $T$ can be computed in $\O(m)$ time~\cite{cormen2022introduction}.

Each bad vertex has degree 2, so it appears only once in $T$. 
Thus, if a vertex $a$ appears multiple times in $T$, it must be a good vertex.
Moreover, since each good vertex is incident to at most one bad vertex in $G'$, every appearance of $a$ in $T'$, except one, is incident to two good vertices. Each such appearance can be shortcut without increasing the weight. 

Therefore, we can obtain a TSP tour $T'$ in $G'$ with weight at most $w(G')$, and the total running time is $\O(m)$.

When we take a shortcut in this way, the removed edges and the newly added edge are all between good vertices. Thus, the set of edges between a good vertex and a bad vertex in $T$ remains unchanged in $T'$. 
Therefore, any good vertex that is not incident to a bad vertex in $G'$ remains not incident to any bad vertex in $T'$.
\end{proof}

Then, we are ready to prove Lemma~\ref{alg3shortcut}.

By LIMB, $\A$, $\B'$, and $\F$ each contains at most $n$ edges. 
By the proof of Lemma~\ref{lemmaCONNECT}, $\R'$ contains at most $\O(q)$ edges. By Step~\ref{3.5} of ALG.4, $\M$ contains at most $n$ edges. 
Note that $q\leq n$.
Therefore, the graph $G'_\A=\A\cup\B'\cup\R'\cup\F\cup\M$ contains at most $\O(n)$ edges. Then, by SHORTCUT, the graph $G''_\A$ also contains at most $\O(n)$ edges.

By Claims~\ref{alg3eulerian} and \ref{usefulclaim}, we can take shortcuts on $G''_\A$ to obtain a TSP tour $T$ in $G''_\A$ such that $w(T)\leq w(G''_\A)$ in $\O(n)$ time. 
Note that $T$ is not a TSP tour in $G$, since for each guessed single anchor $x$ in Step~\ref{singleanchor} of SHORTCUT, $T$ contains both $x$ and its copy $x'$. 
By {SHORTCUT}, $x$ is incident only to good vertices in $G''_\A$, and thus remains incident only to good vertices in $T$ by Claim~\ref{usefulclaim}. Thus, we can keep $x'$ and shortcut $x$ without increasing the weight.
Therefore, given $T$, we can further obtain a TSP tour $T_3$ in $G$ such that 
\[
w(T_3)\leq w(T)\leq w(G''_\A).
\]
The overall running time is $\O(n)\subseteq\O(n^2)$.
\end{proof}

\begin{theorem}\label{t3}
For TSP parameterized by $q$, ALG.4 achieves an FPT $3$-approximation ratio with running time $2^{\O(q\log q)}\cdot n^{\O(1)}$.
\end{theorem}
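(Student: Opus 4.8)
The plan is to derive the theorem by stitching together the guarantees already proved for the sub-algorithms. I would handle the approximation ratio first. Recall that each of Steps~\ref{3.1}--\ref{3.4} enumerates candidates, so among all (nested) guesses there is one \emph{correct} branch: $\A$ is exactly the set of bad chains of $T^*$; the values $f_1,\dots,f_{\size{\A}}$, the hosting trees $F_x$, and the anchor configuration are those induced by $T^*$ as in Lemma~\ref{thelimb}; and the partition $\widetilde{\V}$ of $\widetilde V_a$ is the one induced by $T^*$ as in Lemma~\ref{lemmaCONNECT}. Along this branch, Step~\ref{3.2} yields a spanning $\size{\A}$-forest $\F$ with $w(\F)\le w(\R)$ by (\ref{eq3.2}); Lemma~\ref{thelimb} yields $\B'$ with $w(\B')\le w(\B)$; Lemma~\ref{lemmaCONNECT} yields $\R'$ with $w(\R')\le w(\R)$ such that $G_\A=\A\cup\B'\cup\R'\cup\F$ is connected, $Odd(G_\A)\cap V_b=\emptyset$, and $\size{Odd(G_\A)\cap V(F)}$ is even for every $F\in\F$; Lemma~\ref{usecoro} then gives $\M$ with $w(\M)\le w(\F)\le w(\R)$ and makes $G'_\A=G_\A\cup\M$ Eulerian; and Lemma~\ref{alg3shortcut} converts $G'_\A$ (through $G''_\A$) into a genuine TSP tour $T_3$ of $G$ with $w(T_3)\le w(G''_\A)\le w(G'_\A)$, the last inequality holding because SHORTCUT only deletes edges (Case~1) or swaps $xy,xz$ for $x'y,x'z$ with $x'$ a weight-preserving copy of $x$ (Case~2).

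Chaining these bounds and using $\OPT=w(\A)+w(\B)+w(\R)$ from (\ref{eq3.0}) together with $w(\R)\le\OPT$, I obtain
\begin{align*}
w(T_3)\le w(G'_\A) &= w(\A)+w(\B')+w(\R')+w(\F)+w(\M)\\
&\le w(\A)+w(\B)+3\,w(\R)=\OPT+2\,w(\R)\le 3\,\OPT .
\end{align*}
Since the correct branch succeeds (all required matchings exist by Lemma~\ref{lemmaCONNECT}, and $G''_\A$ is Eulerian), and ALG.4 reports the lightest TSP tour produced over all branches, its output weighs at most $3\,\OPT$.

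For the running time I would simply multiply the per-step costs. Finding a minimum VS $V_b$ costs $\O(3^q n^3)$; Step~\ref{3.1} enumerates $\O(2^q\cdot q!)\subseteq 2^{\O(q\log q)}$ structures for $\A$; Step~\ref{3.2} is polynomial; LIMB contributes $\O(2^q\cdot q^{2q}\cdot (4q^2)^{2q})=2^{\O(q\log q)}$ branches with $n^{\O(1)}$ work each by Lemma~\ref{thelimb}; CONNECT contributes $2^{\O(q\log q)}$ branches with $\O(q^2 2^q)\cdot n^{\O(1)}$ work each by Lemma~\ref{lemmaCONNECT}; Step~\ref{3.5} computes at most $\size{\F}\le q$ minimum-weight matchings in polynomial time; and SHORTCUT runs in $\O(n^2)$ by Lemma~\ref{alg3shortcut}. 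The product of the guess counts is $2^{\O(q\log q)}$ and the remaining factors are $n^{\O(1)}$, giving the claimed $2^{\O(q\log q)}\cdot n^{\O(1)}$.

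Most of the difficulty is already absorbed into the sub-algorithm lemmas, so the two care points I expect to spell out are: (i) that a single branch can be simultaneously correct for Steps~\ref{3.1}--\ref{3.4}, which holds because each later guess is specified relative to the same fixed tour $T^*$ and the (correct) outputs of the earlier steps, so the ``correct'' choices are mutually consistent; and (ii) that $w(G''_\A)\le w(G'_\A)$, i.e.\ that SHORTCUT never increases weight, which is exactly the property noted above. Granting these, the theorem follows from the displayed telescoping inequality and the multiplication of running-time factors.
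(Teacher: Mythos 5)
Your proposal is correct and follows essentially the same route as the paper's proof: on the correct guess branch, chain the bounds from Lemmas~\ref{thelimb}, \ref{lemmaCONNECT}, \ref{usecoro}, and \ref{alg3shortcut} together with (\ref{eq3.0}) and (\ref{eq3.2}) to get $w(T_3)\leq w(\A)+w(\B)+3w(\R)\leq 3\cdot\OPT$, and multiply the guess counts of Steps~\ref{3.1}, \ref{3.3}, \ref{3.4} (each $2^{\O(q\log q)}$) by the polynomial per-iteration costs. The two care points you flag (mutual consistency of the nested guesses, and that SHORTCUT does not increase weight) are exactly the points the paper treats implicitly, so nothing is missing.
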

\begin{proof}
First, we analyze the weight of the TSP tour $T_3$ in $G$.

By {SHORTCUT} and Lemma~\ref{alg3shortcut}, we know that 
\begin{align*}
w(T_3)&\leq w(G''_\A)\leq w(G'_\A)=w(\A)+w(\B')+w(\R')+w(\F)+w(\M).
\end{align*}
Then, by (\ref{eq3.0}), (\ref{eq3.2}), and Lemmas~\ref{thelimb}, \ref{lemmaCONNECT}, and \ref{usecoro}, we obtain 
\begin{align*}
w(\A)+w(\B')+w(\R')+w(\F)+w(\M)
&\leq w(\A)+w(\B)+w(\R)+w(\F)+w(\F)\\
&\leq w(\A)+w(\B)+3w(\R)\\
&\leq 3(w(\A)+w(\B)+w(\R))=3\cdot\OPT.
\end{align*}
Thus, we have $w(T_3)\leq 3\cdot\OPT$, which implies that ALG.4 achieves a $3$-approximation ratio.

Next, we analyze the overall running time of ALG.4.

We classify the loops in ALG.4 as follows.

\textbf{Level-1 Loop:} Step~\ref{3.1} guesses the set of bad chains $\A$, incurring $\O(2^q\cdot q!)$ iterations, each taking $n^{\O(1)}$ time~\cite{arxiv24}.

\textbf{Level-2 Loop:} Step~\ref{3.3} guesses the set of limbs $\B'$ satisfying the properties in Lemma~\ref{thelimb}, incurring $\O(2^q\cdot q^{2q}\cdot (4q^2)^{2q})$ iterations, each taking $n^{\O(1)}$ time. 

\textbf{Level-3 Loop:} Step~\ref{3.4} guesses the set of edges  $\R'$ satisfying the properties in Lemma~\ref{lemmaCONNECT}, incurring $2^{\O(q\log q)}$ iterations, each taking $\O(q^2\cdot 2^q)\cdot n^{\O(1)}$ time. 

Moreover, Step~\ref{3.2} takes $\O(n^2\log n)$ time~\cite{khachay2016approximability}, Step~\ref{3.5} takes $\O(n^3)$ time~\cite{lawler1976combinatorial}, and Step~\ref{3.6} takes $\O(n^2)$ time by Lemma~\ref{alg3shortcut}.

Putting all together, the total number of iterations is
\[
\O(2^q\cdot q!)\cdot\O(2^q\cdot q^{2q}\cdot (4q^2)^{2q})\cdot 2^{\O(q\log q)}\subseteq 2^{\O(q\log q)},
\]
and the running time per iteration is
\begin{align*}
&n^{\O(1)}+\O(n^2\log n)+n^{\O(1)}+\O(q^2\cdot 2^q)\cdot n^{\O(1)}+\O(n^3)+\O(n^2)\subseteq 2^{\O(q)}\cdot n^{\O(1)}.
\end{align*}

Therefore, ALG.4 achieves an FPT $3$-approximation ratio with overall running time 
\[
2^{\O(q\log q)}\cdot 2^{\O(q)}\cdot n^{\O(1)}\subseteq 2^{\O(q\log q)}\cdot n^{\O(1)},
\]
as desired.
\end{proof}

\section{Conclusion}
Many problems admit good approximation algorithms on metric graphs but become hard to approximate on general graphs. In practice, due to measurement errors in data or the intrinsic nature of certain problems, graphs that are originally metric may locally violate the triangle inequality. Although such violations are typically rare, they may render algorithms designed for metric graphs inapplicable. Recently, there has been growing interest in characterizing ``distance'' parameters between general and metric graphs, and in leveraging these parameters to develop parameterized algorithms.

We investigate FPT approximation algorithms for the classic TSP problem under two parameters, $p$ and $q$. By introducing new techniques, we significantly improve upon the best-known FPT approximation algorithms for both parameters. Our methods also show promise for extension to related problems parameterized by $p$ and $q$, such as the asymmetric TSP~\cite{DBLP:journals/siamcomp/TraubV22} and the prize-collecting TSP~\cite{DBLP:conf/stoc/BlauthN23}. 

Several directions are worthy of further investigation:
\begin{enumerate}
\item 
For the parameter $q$, is it possible to design a constant-factor approximation algorithm with running time $2^{\mathcal{O}(q)} \cdot n^{\mathcal{O}(1)}$?

\item  It would also be valuable to explore additional parameters that quantify the ``distance'' from a general graph to a metric one, potentially enabling better algorithms for handling near-metric graphs.
\end{enumerate}


\bibliographystyle{plain}
\bibliography{aaai2026}

\appendix

\section{Two Assumptions}\label{appa}
\subsection{TSP Parameterized by $p$}
Recall that we assume $\min\{\size{V_g}, \size{V_b}\}\geq 3$.
We now show that when $\min\{\size{V_g}, \size{V_b}\}<3$, one can easily obtain an FPT $\alpha$-approximation algorithm with running time $2^{\O(p)}+n^{\O(1)}$. 

\textbf{Case~1: $\size{V_g}<3$.} In this case, we can directly solve the problem optimally in $2^{\O(p)}$ time using the DP method~\cite{bellman1962dynamic}. Thus, the problem admits an FPT approximation ratio of $1$ with running time $2^{\O(p)}$.

\textbf{Case~2: $\size{V_b}<3$.} In this case, the number of violating triangles must be 0, and then $\size{V_b}=0$, i.e., the problem reduces to metric TSP~\cite{DBLP:conf/stoc/KarlinKG21}. Thus, the problem admits an $\alpha$-approximation algorithm with running time $n^{\O(1)}$. 

Hence, when $\min\{\size{V_g}, \size{V_b}\}<3$, the problem admits an FPT $\alpha$-approximation algorithm with running time $2^{\O(p)}+n^{\O(1)}$.

\subsection{TSP Parameterized by $q$}
Recall that we assume $\min\{\size{V_g}, \size{V_b}\}\geq 1$. Using a similar argument as in the case of $p$, it is easy to obtain that when $\min\{\size{V_g}, \size{V_b}\}<1$, the problem admits an FPT $\alpha$-approximation algorithm with running time $2^{\O(q)}+n^{\O(1)}$.

\end{document}